\documentclass[11pt]{article}

\usepackage[a4paper,left=1in,right=1in,top=1in,bottom=1in]{geometry}

\usepackage[authoryear]{natbib}
\usepackage{graphicx} 
\usepackage{amsmath,amsfonts,amssymb}
\usepackage{amsthm}
\usepackage{xcolor}
\usepackage{empheq}
\usepackage{hhline}
\usepackage{float}
\usepackage{subcaption}
\usepackage{multirow}
\usepackage{booktabs}
\usepackage{placeins}

\usepackage{hyperref}
\definecolor{mydarkblue}{RGB}{72,61,139}
\hypersetup{
 colorlinks=true,breaklinks=true,
 urlcolor= mydarkblue,linkcolor=mydarkblue,citecolor=mydarkblue,
 pdfauthor={Daluiso, R., Folgar-Came\'an, H., Pallavicini, A., V\'azquez, C.}
}

\newtheorem{theorem}{Theorem}[section]
\newtheorem{lemma}[theorem]{Lemma}
\newtheorem{corollary}[theorem]{Corollary}
\theoremstyle{definition}
\newtheorem{definition}[theorem]{Definition}

\theoremstyle{remark}
\newtheorem{remark}[theorem]{Remark}
\newtheorem{hypotheses}{Hypotheses}

\title{Rough volatility dynamics in commodity markets\footnote{The authors report no potential competing interests. The opinions expressed in this document are solely those of the authors and do not represent in any way those of their present and past employers.}}
 
\author{
R.~Daluiso\thanks{Intesa Sanpaolo, Financial Engineering. Address: largo Mattioli 3, Milano 20121, Italy. Email address: \texttt{roberto.daluiso@intesasanpaolo.com}.},
H.~Folgar-Came\'an\thanks{Department of Mathematics and CITIC, Universidade da Coruña. Address: Campus Elviña s/n, 15071-A Coruña, Spain. Email address: \texttt{hector.folgar.camean@udc.es}.},
A.~Pallavicini\thanks{Intesa Sanpaolo, Financial Engineering. Address: largo Mattioli 3, Milano 20121, Italy. Email address: \texttt{andrea.pallavicini@intesasanpaolo.com}.},
C.~V\'azquez\thanks{Department of Mathematics and CITIC, Universidade da Coruña. Address: Campus Elviña s/n, 15071-A Coruña, Spain. Email address: \texttt{carlosv@udc.es}.}}

\begin{document}

\maketitle

\begin{abstract}
In this paper, we develop a general rough volatility model for commodities that provides an automatic calibration of the initial term structure of the futures prices and an appropriate treatment of the Samuelson effect. After the theoretical analysis of this general model, we focus on the rBergomi and rHeston models and their calibration to market data of vanilla futures options on WTI Crude Oil. Finally, numerical results illustrate the performance of the proposed rough volatility models for commodities pricing.
\end{abstract}

\bigskip
 
\noindent {\bf JEL classification codes:} C63, G13.\\
\noindent {\bf AMS classification codes:} 65C05, 91G20, 91G60.\\
\noindent {\bf Keywords:} Commodity, rough volatility

\newpage
{\small \tableofcontents}
\vfill
\newpage

\pagestyle{myheadings} \markboth{}{{\footnotesize  Daluiso, Folgar-Came\'an, Pallavicini, V\'azquez. Rough volatility dynamics in commodity markets.}}

\section{Introduction}

Historically, the Black-Scholes model \citep{blackscholespaper} has been used extensively to price option contracts. It provides a closed-form formula for pricing European options and laid the theoretical foundations for the modern theory of option pricing.

Among its various assumptions, the Black-Scholes model assumes that the volatility of the underlying asset price is constant; however, this assumption is inconsistent with empirical market data. Over the last half century, new models have been proposed to implement non-constant volatility and reproduce the market dynamics more accurately, such as stochastic, local, or local-stochastic volatility models.

\citet{gatheralpaper} analysed  the log-volatility increments for several equity indices and found that, for reasonable timescales of practical interest, these increments appeared normally distributed and were $H$-self-similar for some constant $H>0$. Moreover, the estimated Hurst exponent $H$ was found to be less than $0.5$, indicating rough behaviour in the log-volatility dynamics. Motivated by these results, the authors introduced the rough fractional stochastic volatility model (rFSV), in which the volatility is driven by a fractional Brownian motion (fBm) with a small Hurst parameter $H \sim 0.1$.

Following \citet{gatheralpaper}, rough volatility models (defined as models in which the volatility is driven by a continuous process with Hölder coefficient lower than $0.5$) have received considerable attention from both academics and practitioners, due to their ability to reproduce implied volatility surfaces more accurately. For instance, the ATM volatility skew of equity indices is empirically observed to decay exponentially with time to maturity, a stylised fact known as the Fukasawa effect. In classical stochastic volatility models the ATM volatility skew behaves constant for short times to maturity (see for instance \citet{bergomibook}), whereas \citet{fukasawapaper} showed formally that the ATM volatility skew generated by rough volatility models decays exponentially, in line with the empirical evidence. We refer the reader to \citet{nunnomishurapaper} for a more comprehensive survey of the historical development of rough volatility modelling.

Rough volatility models have been successfully applied in several markets, including equities \citep{gatheralpaper} or cryptocurrencies \citep{takaishipaper}. However, other financial markets remain relatively unexplored, such as the commodity markets. Our review of the existing literature suggests that the applicability of rough volatility in commodity markets has only been investigated in the work of \citet{alfeuspaper}, who estimated the Hurst parameter for several commodities and found that $H$ was consistently smaller than 0.2 across all commodity markets considered.

The objective of this paper is to explore the applicability of rough volatility models in the context of commodities and to price futures options within this market, building on the work of \citet{pallavicinipaper}.

Our motivation is two-fold: on the one hand, the price time-series in different commodities exhibit rough behaviour, as shown in \citet{alfeuspaper}. On the other hand, we also seek a model that allows for a suitable joint calibration of multiple contracts with different maturities, where the involved parameters correctly explain the market dynamics. In view of these motivations, we develop a general rough volatility model for commodities that provides an automatic calibration of the initial term structure of futures prices and an appropriate treatment of the Samuelson effect (see Section \ref{sec:samuelson}). After the theoretical analysis of this general model, we focus on the rBergomi and rHeston models and their calibration to market data of vanilla futures options on WTI Crude Oil.

The rest of the paper is organised as follows. In Section \ref{sec:market} we outline the main characteristics of commodity markets and introduce the mathematical concept of fictitious spot price, following the exposition in \citet{pallavicinipaper}. In Section \ref{sec:model}, we summarise the main properties of the fractional Brownian motion and introduce the proposed general rough volatility model, providing a mathematical justification and particularising for the rBergomi and rHeston models. In Section \ref{sec:calibration}, we discuss the simulation and calibration schemes for the models considered. In Section \ref{sec:results}, we present the numerical results for some real cases examples. Finally, in Section \ref{sec:conclusion}, we summarise and conclude.

\section{Commodity markets}\label{sec:market}

Commodity markets can be defined as financial markets in which raw materials and other physical goods are traded, such as crops, gold, and crude oil. The main characteristic of commodities is that they are physical objects that need to be stored and transported, involving logistical challenges when trading them. Derivative contracts allow traders to participate in the market without dealing with any of these issues.  Due to this, futures contracts are the most liquid commodity contracts, followed by options on futures \citep{pallavicinipaper}.  In the following, we recall the notation introduced in \citet{pallavicinipaper}, which we will use extensively.

Due to the physical nature of commodities, futures contracts typically stipulate a period during which the seller can deliver the goods at any time. We denote this period by $[T_0^d, T_1^d]$, where $T_0^d$ and $T_1^d$ are the start and final delivery dates, respectively.

The seller must notify the buyer a few days prior to delivery. Since the delivery may occur at any time within $[T_0^d, T_1^d]$, the notification can also occur within a specified period $[T_0^n, T_1^n]$, where $T_0^n$ and $T_1^n$ are the first and last notification dates, respectively. The only restriction is that the notification must occur before the delivery.

Futures option contracts can be traded over a period $[T_0^o, T_1^o]$, which lies before both $T_1^f$ and $T_0^n$; that is, prior to the end of the underlying futures trading period and the beginning of the notification period, respectively. Here, $[T_0^f, T_1^f]$ denotes the period during which the underlying futures contract can be traded.

We denote by $F_t(T)$ the price at time $t$ of a futures contract with maturity date $T$. We omit $T$ in the notation $F_t(T)$ when unambiguous.

In equity markets and under standard assumptions on absence of arbitrage, the future price converges to the underlying spot price at delivery, and an equivalent measure exists such that the futures price process is a martingale. This is not true in commodity markets due to the delivery procedure: in general, commodities spot prices may not coincide with futures prices at maturity, so the arguments used for equity markets cannot be directly transferred to commodities.

To address this issue, as previously done in the literature \citep{pallavicinipaper}, we assume the existence of a single time-continuous process $S_t$, which we call the \textbf{fictitious spot price}, representing the price of the rolling futures, so that its value coincides with futures contract prices on the $T^{\text{last}}$ date of each contract, that is:
\begin{equation}\label{def_fictitious_spot_price}
	F_t = \mathbb{E}_t\left[ S_{T^{\text{last}}} \right], \quad t\leq T^{\text{last}},
\end{equation}
where $T^{\text{last}} = \min\{T_0^n, T_1^f\}$ and $\mathbb{E}_t$ denotes the conditional expectation. From now on, we will refer to $T^{\text{last}}$ as the maturity date of the futures contract, unless stated otherwise. This choice is natural from a trading perspective, as $T^{\text{last}}$ is the last date on which trading the futures contract for financial exposure is reasonable.

\section{Rough volatility for commodities}\label{sec:model}
\subsection{Fractional Brownian motion}
Fractional Brownian motion (fBM for short) was first introduced theoretically in \citet{kolmogorovpaper} and later studied for its statistical properties in \citet{mandelbrotpaper}. Most of the literature on fractional volatility consists of processes driven by fBm, and we will follow the same approach in this study; therefore, it is advisable to introduce its main properties.

\begin{definition}[Fractional Brownian motion]
	A Gaussian stochastic process $\{W_t^H, \, t\geq0\}$ is said to be a \textbf{fractional Brownian motion} with \textbf{Hurst parameter} $H\in(0,1)$ if
	\begin{itemize}
		\item $\mathbb{E}[W_t^H] = 0$ for every $t\geq0$ and
		\item $\mathbb{E}[W_t^H W_s^H] = \frac{1}{2}\left( t^{2H} + s^{2H} - |t-s|^{2H} \right)$ for every $t,s\geq0$.
	\end{itemize}
\end{definition}
The standard Brownian motion is recovered when $H=0.5$. Hence, we can consider the fBm as a generalisation of the standard Brownian motion.

The Hurst parameter dictates the regularity of the paths. Indeed, there exists a modification of the process such that its trajectories are $\gamma$-Hölder continuous almost surely for any $\gamma \in (0,H)$, see for instance \citet{mishurabook}. Thus, when $H<0.5$ the trajectories will be rougher than those of the standard Brownian motion, and they will be smoother when $H>0.5$. Furthermore, we also remark the following properties of the fBm:
\begin{itemize}
    \item It is neither a semimartingale nor a Markov process for $H\neq 0.5$ (see \citet{rogerspaper}).
    \item It is $H$-self-similar (see \citet{mishurabook}).
    \item When $H > 0.5$, the increments of the fBm exhibit long memory, also known as long-term dependence (see \citet{mishurabook}).
\end{itemize}

A more comprehensive discussion on the fractional Brownian motion and its properties can be found in \citet{mishurabook}. We also refer the reader to \citet{nunnomishurapaper} for a more detailed review on the role of rough volatility and fBm in quantitative finance.

\subsection{General rough volatility model for commodities}

Building on the work of \citet{pallavicinipaper}, we propose a linear stochastic volatility model for the fictitious spot price $S_t$, where a general volatility dynamics is considered:
\begin{subequations}\label{eq:model}
	\begin{empheq}[left=\empheqlbrace]{align}
		&dS_t = (\alpha(t) + \beta(t)S_t) \, dt + \sqrt{\xi_t^t} S_t \, dW_t^1, \quad S_0 = \overline{S},\\
		&d\xi_t^u = \lambda(t,u) \, dW_t^2,\label{eq:model_vol}
	\end{empheq}
\end{subequations}
where $\alpha(t)$ is a positive function on time, $\beta(t)$ is a function on time, $\overline{S}$ is the (positive) initial value of the fictitious spot price and $\rho \in [-1,1]$ is the correlation between the standard Brownian motions $W^1$ and $W^2$. 

We consider an affine drift, as it allows for simple calculations to explicitly model the Samuelson effect, which we depict in Section \ref{sec:samuelson}. This methodology has been previously applied to commodity pricing, see for example \citet{drimuspaper, swishchukpaper}.

\begin{hypotheses}\label{hypotheses}
Following the considerations in \citet{bourgeypaper}, we assume that $(\lambda(t,u))_{t \leq u}$ is a scalar stochastic process such that:
\begin{itemize}
    \item $\int_0^u \lambda(t,u)^2 \, dt < \infty$ for every $u\geq0$,
    \item $(\lambda(t,u))_{t \leq u}$ is locally integrable for $t\geq0$,
    \item the initial forward variance curve $\xi_0(u):=\xi_0^u$ is in $L^1_{\text{loc}}(\mathbb{R}^{+})$, and
    \item the equation in (\ref{eq:model_vol}) has a unique weak positive solution and $\xi_t^t$ has almost surely continuous trajectories.
\end{itemize}
\end{hypotheses}

\begin{remark}
    Note that, as shown in \citet{bourgeypaper}, this general forward variance model admits various particular specifications, including the Heston model, the n-factor Bergomi model, and their rough extensions, as will be shown later.
\end{remark}

For convenience, we introduce the \textbf{normalised fictitious spot price}:
\begin{equation*}
    s_t := \dfrac{S_t}{F_0(t)},
\end{equation*}
whose dynamics follows from the product rule and satisfies the following stochastic differential equation:
\begin{equation}\label{normalised_spot_dynamics}
	ds_t = (a(t) + (\beta(t) - \partial_t \ln F_0(t))s_t) \, dt + \sqrt{\xi^t_t} s_t \, dW_t^1,\quad s_0=1,
\end{equation}
where
\begin{equation*}
	a(t):= \dfrac{\alpha(t)}{F_0(t)}.
\end{equation*}

By imposing Equation (\ref{def_fictitious_spot_price}), we establish a relationship between the normalised fictitious spot price $s_t$ and the futures prices $F_t$. In doing so, the functional parameter $\beta(t)$ is automatically calibrated.

First, we need the following lemma:
\begin{lemma}
Given the SDE
\begin{equation*}
dX_t = (a(t) + b(t)X_t) \, dt + g(t) X_t \, dW_t, \quad t\in[0,T],
\end{equation*}
where $X_0$ is given and $a(t), b(t), g(t)$ are continuous functions on $t \in [0,T]$, then the solution is given by
\begin{equation*}
X_t = X_0 Z_t^{-1} + Z_t^{-1} \int_0^t Z_sa(s) \, ds,
\end{equation*}
where
\begin{equation*}
\begin{gathered}
Z_t = \exp\left( \int_0^t \left( \dfrac{1}{2}g^2(s) - b(s) \right)\, ds - \int_0^t g(s) \, dW_s \right),\\
Z_t^{-1} = \exp\left( \int_0^t \left(b(s) - \dfrac{1}{2}g^2(s) \right)\, ds + \int_0^t g(s) \, dW_s \right).
\end{gathered}
\end{equation*}
\end{lemma}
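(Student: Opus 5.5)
The plan is to use the classical integrating-factor method for linear SDEs: show that $Z_t X_t$ has no stochastic part and pure drift $Z_t a(t)\,dt$. Two routes are possible. One can verify directly that the proposed $X_t$ solves the SDE. Alternatively, one applies Itô's product rule to $Z_t X_t$ for an arbitrary solution $X$, which gives existence and uniqueness at the same time. I will take the second route, since it also yields uniqueness and avoids guessing.

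First, compute the dynamics of $Z_t$. Write $Z_t = e^{Y_t}$ with
\[
dY_t = \left(\tfrac12 g^2(t) - b(t)\right)dt - g(t)\,dW_t .
\]
Itô's formula gives
\[
dZ_t = Z_t\left(dY_t + \tfrac12 g^2(t)\,dt\right) = Z_t\left((g^2(t) - b(t))\,dt - g(t)\,dW_t\right).
\]
The stochastic integrals are well defined because $g$ is continuous, hence bounded, on $[0,T]$. The expression for $Z_t^{-1}$ is then just $e^{-Y_t}$, which needs no separate argument.

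Next, let $X$ be any solution of the SDE and apply the product rule:
\[
d(Z_tX_t) = Z_t\,dX_t + X_t\,dZ_t + d\langle Z, X\rangle_t .
\]
The three terms are:
\begin{itemize}
\item $Z_t\,dX_t = Z_t\big(a(t) + b(t)X_t\big)dt + Z_t g(t)X_t\,dW_t$;
\item $X_t\,dZ_t = Z_tX_t\big((g^2(t) - b(t))\,dt - g(t)\,dW_t\big)$;
\item the cross-variation $d\langle Z,X\rangle_t = -g^2(t)Z_tX_t\,dt$.
\end{itemize}
Summing them, the $dW$ terms cancel, and so do the $b(t)Z_tX_t$ and $g^2(t)Z_tX_t$ terms. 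What remains is $d(Z_tX_t) = Z_t a(t)\,dt$. Integrating from $0$ with $Z_0 = 1$ gives $Z_tX_t = X_0 + \int_0^t Z_s a(s)\,ds$. Multiplying by $Z_t^{-1}$, which exists because $Z_t > 0$, gives the stated formula.

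For existence, I would either run the same computation backwards or check directly that the stated $X_t$ satisfies the SDE. The direct check writes $X_t = Z_t^{-1}U_t$ with $U_t$ of finite variation, and the cross-variation term vanishes. The drift and diffusion coefficients are Lipschitz in $x$ with continuous time dependence, so standard existence theory also guarantees a strong solution.

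The only delicate point is the bookkeeping of the Itô correction and the cross-variation, in particular getting the sign of $g^2$ in $dZ_t$ right so that the cancellation actually occurs. Everything else is routine. Continuity of the coefficients ensures that all integrals exist and that $Z$ is a continuous semimartingale.
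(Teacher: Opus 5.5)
Your proof is correct. The formula for $dZ_t$, the cancellation giving $d(Z_tX_t)=Z_t\,a(t)\,dt$, and the direct check via $X_t=Z_t^{-1}U_t$ all check out, and this integrating-factor argument is the standard derivation in the textbooks the paper cites in place of a proof. Your computation never uses that $a,b,g$ are deterministic, only that they are adapted with continuous paths, so it also justifies the paper's actual use of the lemma with the stochastic coefficient $g(t)=\sqrt{\xi_t^t}$.
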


\begin{proof}
This result can be found in any classical book covering stochastic differential equations; see for instance \citet[Section 3.3]{mikoschbook} or \citet[Chapter 4]{ikedabook}.
\end{proof}

Note that Equation (\ref{eq:model_vol}) admits a weak solution by hypothesis; that is, there exists a stochastic process $\xi_t^u$ and a Brownian motion $\tilde{W}_t^2$ such that they satisfy the SDE. We then define a Brownian motion $W_t:=\rho W_t^\perp + \sqrt{1-\rho^2}\tilde{W}_t^2$, where $\rho\in[-1,1]$ and $W_t^\perp$ is a Brownian motion independent of $\tilde{W}_t^2$.

Applying the previous lemma to Equation (\ref{normalised_spot_dynamics}) driven by $W_t$, we obtain
\begin{equation*}
s_t = Z_t^{-1} + Z_t^{-1}\int_{0}^{t} Z_u a(u) \, du,
\end{equation*}
where
\begin{equation*}
Z_t = \exp\left( \int_0^t \left( \dfrac{1}{2}\xi_u^u - \beta(u) + \partial_u \ln F_0(u) \right) \, du - \int_0^t \sqrt{\xi_u^u} \, dW_u \right).
\end{equation*}
Note that
\begin{equation*}
\begin{gathered}
Z_t^{-1} = \exp\left( \int_0^t \mu(u) \, du \right)\mathcal{E}\left( \int_0^t \sqrt{\xi_u^u} \, dW_u \right),
\end{gathered}
\end{equation*}
where $\mu(t):=\beta(t) - \partial_t \ln F_0(t)$ and $\mathcal{E}(\cdot)$ denotes the Doléans-Dade exponential.

Next, consider the process
\begin{equation*}
Y_t := \mathcal{E}\left( \int_0^t \sqrt{\xi_u^u} \, dW_u \right).
\end{equation*}
For now, we will assume that $Y_t$ is a martingale. In the following, we will show that this is the case for some very important specifications of the general forward variance model, such as the rHeston or rBergomi models.

Then, the term $Z_T^{-1}Z_u$ for any $u \in [0,T]$ can be written as:
\begin{equation*}
\begin{gathered}
Z_T^{-1}Z_u = \exp\left( \int_u^T \mu(\omega) \, d\omega \right) \dfrac{\mathcal{E}\left( \int_0^T \sqrt{\xi_\omega^\omega} \, dW_\omega \right)}{\mathcal{E}\left( \int_0^u \sqrt{\xi_\omega^\omega} \, dW_\omega \right)} = \exp\left( \int_u^T \mu(\omega) \, d\omega \right) \dfrac{Y_T}{Y_u},
\end{gathered}
\end{equation*}
with conditional expectation
\begin{equation*}
\mathbb{E}_t[Z_T^{-1}Z_u] = \exp\left( \int_u^T \mu(\omega) \, d\omega \right) \mathbb{E}_t\left[\dfrac{Y_T}{Y_u}\right], \quad \forall t\in[0,T].
\end{equation*}
For a fixed $u\in[0,T]$, it holds that
\begin{equation*}
\mathbb{E}_t[Z_T^{-1}Z_u] =
\begin{cases}
	\exp\left( \int_u^T \mu(\omega) \, d\omega \right)\dfrac{Y_t}{Y_u} \quad &\text{if } t\geq u,\\
	\exp\left( \int_u^T \mu(\omega) \, d\omega \right)\quad &\text{otherwise}.
\end{cases}
\end{equation*}
 Thus, the conditional expectation of $s_T$ is
 \begin{equation}\label{conditional_expectation}
\begin{gathered}
\mathbb{E}_t[s_T] = \exp\left( \int_0^T \mu(\omega) \, d\omega \right) Y_t + \int_0^t  \exp\left( \int_u^T \mu(\omega) \, d\omega \right)\dfrac{Y_t}{Y_u} a(u) \, du + \\ \int_t^T 	\exp\left( \int_u^T \mu(\omega) \, d\omega \right)a(u) \, du,
\end{gathered}
\end{equation}
so that, if we differentiate with respect to $T$, we obtain:
\begin{equation*}
\begin{gathered}
\dfrac{\partial_T F_t(T) F_0(T) - F_t(T)\partial_T F_0(T)}{F_0(T)^2} = \mu(T) \left[ \exp\left( \int_0^T \mu(\omega) \, d\omega \right) Y_t\right. +\\
\left.\int_0^t  \exp\left( \int_u^T \mu(\omega) \, d\omega \right)\dfrac{Y_t}{Y_u} a(u) \, du + \int_t^T \exp\left( \int_u^T \mu(\omega) \, d\omega \right) a(u) \, du \right] + a(T).
\end{gathered}
\end{equation*}

By comparing the previous identity with Equation (\ref{conditional_expectation}), we observe that the term multiplied by $\mu(T)$ is exactly $\mathbb{E}_t[s_T]$. Therefore, we have
\begin{equation*}
\dfrac{\partial_T F_t(T) F_0(T) - F_t(T)\partial_T F_0(T)}{F_0(T)^2} = \mu(T)\mathbb{E}_t[s_T] + a(T) = \mu(T)\dfrac{F_t(T)}{F_0(T)} + a(T).
\end{equation*}
Then, simple algebra leads to
\begin{equation*}
\partial_T F_t(T) = a(T) F_0(T) + \beta(T)F_t(T), \quad F_t(t) = s_t F_0(t),
\end{equation*}
which is an ODE for the futures prices with respect to the maturity date $T>t$.

In particular, for $t=0$ we can solve this equation for
\begin{equation*}
\beta(T) = \partial_T \ln F_0(T) - a(T).
\end{equation*}
Moreover, if we solve the previous ODE, we obtain the following relationship between the normalised fictitious spot price and the futures price:
\begin{equation}\label{futures_prices}
F_t(T) = F_0(T) \left( 1- \left(1-s_t\right)\exp\left(-\int_t^T a(u)\, du\right) \right),
\end{equation}
which leads to the final model:
\begin{equation}\label{model}
	\begin{cases}
		ds_t &= a(t)\cdot(1-s_t) \, dt + \sqrt{\xi_t^t}s_t \, dW_t^1, \quad s_0=1,\\
		d\xi_t^u &= \lambda(t,u) \, dW_t^2,\\
		F_t(T) &= F_0(T) \left( 1- \left(1-s_t\right)\exp\left(-\int_t^{T}a(u)\, du\right) \right),\\
		dW_t^1 dW_t^2 &= \rho \, dt.
	\end{cases}
\end{equation}

The preceding argument is summarised in the following theorem, which allows us to promote models usually employed in equity markets into the commodity framework. Indeed, we  ensure an automatic calibration of the initial term structure of the futures prices and a suitable treatment of the Samuelson effect, as will be discussed at the end of the section.

\begin{theorem}
Given a general forward variance model in the form of (\ref{eq:model_vol}) that satisfies Hypotheses \ref{hypotheses}, the model (\ref{eq:model}) can be written in the form of (\ref{model}) if the stochastic process $Y_t$ defined by
\begin{equation}\label{eq:y_t}
    dY_t = \sqrt{\xi_t^t} Y_t dW_t^1
\end{equation}
is a martingale.
\end{theorem}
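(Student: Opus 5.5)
The proof will follow the derivation that precedes the statement and make each step rigorous under the martingale assumption on $Y_t$ in (\ref{eq:y_t}).

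First, take the weak solution $(\xi,\tilde W^2)$ guaranteed by Hypotheses \ref{hypotheses}. Build the correlated driver $W^1$ on a possibly enlarged space with an independent Brownian motion. Pass to the normalised price $s_t=S_t/F_0(t)$, which satisfies (\ref{normalised_spot_dynamics}). The earlier lemma is stated for deterministic continuous coefficients, but here $g(t)=\sqrt{\xi_t^t}$ is random. Since $\xi_t^t$ has continuous paths, I will not rely on the lemma's hypotheses; instead I will verify the closed form directly. Define $Z_t$ as in the text and apply It\^o's product rule to $Z_t s_t$. The $dW$ terms cancel, leaving $d(Z_ts_t)=Z_ta(t)\,dt$, which gives $s_t=Z_t^{-1}\bigl(1+\int_0^t Z_ua(u)\,du\bigr)$. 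Uniqueness follows from the same computation applied to the difference of two solutions. Next, factor $Z_t^{-1}=e^{\int_0^t\mu}\,Y_t$, where $Y$ is the Dol\'eans-Dade exponential solving (\ref{eq:y_t}).

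Second, compute $\mathbb{E}_t[s_T]$. Write $s_T=e^{\int_0^T\mu}Y_T+\int_0^T e^{\int_u^T\mu}\frac{Y_T}{Y_u}a(u)\,du$ and interchange the conditional expectation with the $du$-integral using conditional Fubini. This is justified because $a\ge0$ and all terms are positive, so Tonelli applies. For $u\le t$, the martingale property gives $\mathbb{E}_t[Y_T/Y_u]=Y_t/Y_u$. For $u>t$, the tower property with $\mathbb{E}_u[Y_T]=Y_u$ gives $1$. This is exactly where the hypothesis is used: without it, $Y$ is only a positive local martingale, hence a supermartingale, and these equalities become inequalities. Together these yield (\ref{conditional_expectation}).

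Third, impose the fictitious spot condition (\ref{def_fictitious_spot_price}), which reads $F_t(T)=F_0(T)\,\mathbb{E}_t[s_T]$.
\begin{itemize}
\item At $t=0$, $Y_0=1$ and $F_0(T)=F_0(T)\bigl(e^{\int_0^T\mu}+\int_0^Te^{\int_u^T\mu}a(u)\,du\bigr)$. Differentiating in $T$, or solving the linear integral equation, forces $\mu=-a$, that is, $\beta(T)=\partial_T\ln F_0(T)-a(T)$.
\item Substituting $\mu=-a$ back into (\ref{conditional_expectation}) for general $t$, the identity $e^{-\int_0^t a}Y_t+\int_0^t e^{-\int_u^t a}\frac{Y_t}{Y_u}a(u)\,du=s_t$ (the closed form at time $t$) gives $\mathbb{E}_t[s_T]=e^{-\int_t^Ta}s_t+1-e^{-\int_t^Ta}$. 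This is (\ref{futures_prices}).
\item Substituting $\beta$ into (\ref{normalised_spot_dynamics}) gives the drift $a(t)(1-s_t)$.
\end{itemize}
This establishes all the lines of (\ref{model}).

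The main obstacle is justifying the conditional Fubini interchange and the differentiation in $T$. Differentiability of $F_0$ is needed for $\partial_t\ln F_0$, and I will assume $F_0$ is $C^1$ and positive. Working with the integral identity at $t=0$ rather than differentiating avoids regularity issues in $F_t(T)$. Positivity of the integrands settles the integrability concerns, so apart from the martingale property of $Y$, everything reduces to checking regularity.
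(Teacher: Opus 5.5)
Your proposal is correct and follows essentially the same route as the paper: the closed-form solution of the linear SDE for $s_t$, the factorisation $Z_t^{-1}=e^{\int_0^t\mu}Y_t$, the computation of $\mathbb{E}_t[s_T]$ via the martingale property of $Y$, and the identification $\beta(T)=\partial_T\ln F_0(T)-a(T)$ from the $t=0$ consistency condition. Your deviations are refinements that remove regularity assumptions on $F_t(\cdot)$ without changing the argument: you verify the closed form directly by It\^o's product rule, since the cited lemma only covers deterministic coefficients; you justify the interchange by Tonelli; and you substitute $\mu=-a$ back into (\ref{conditional_expectation}) instead of deriving and solving the ODE in $T$.
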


The forward variance model (\ref{eq:model_vol}) is very flexible and can take many interesting forms, as shown in \citet{bourgeypaper}. In this work, we are particularly interested in rough volatility models; accordingly, we focus our attention on the rBergomi \citep{bayer} and rHeston models \citep{abijaber}.

\begin{corollary}[rBergomi]\label{cor:rb}
For any function of maturity $F_0(\cdot)$, the system of SDEs
\begin{equation}\label{eq:model_rb}
	\begin{cases}
		ds_t = a(t)\cdot(1-s_t) \, dt + \sqrt{v_t}s_t \, dW_t^1, \quad s_0=1,\\
		v_t = \xi_0(t)\exp\left( \dfrac{-\eta^2}{2}t^{2H} + \eta \sqrt{2H} \displaystyle \int_0^t \dfrac{dW_s^2}{(t-s)^{1/2-H}} \right),
	\end{cases}
\end{equation}
defines a model where future prices given by (\ref{futures_prices}) are martingales starting from $F_0(T)$, where $a(t)>0$ is the mean reversion speed of the fictitious spot price, $\xi_0(t)\geq0$ is the initial forward variance curve, $H\in(0,1)$ is the Hurst parameter, $\eta>0$ is the vol-vol and and $\rho\in[-1,0)$ is the correlation between $W_t^1$ and $W_t^2$.
\end{corollary}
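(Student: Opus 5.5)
Following the Theorem, the plan has two parts: realise the rBergomi variance as a forward variance model of the form (\ref{eq:model_vol}) satisfying Hypotheses \ref{hypotheses}, and then show that the stochastic exponential $Y_t$ of (\ref{eq:y_t}) with $v_t=\xi_t^t$ is a true martingale. Once both hold, the derivation preceding the Theorem yields the representation (\ref{futures_prices}). Each $F_t(T)$ is then an affine function of $s_t$, and the computation of $\mathbb{E}_t[s_T]$ in (\ref{conditional_expectation}) shows $F_t(T)=F_0(T)\,\mathbb{E}_t[s_T]$. Hence $F_t(T)$ is a martingale with $F_t(T)\to F_0(T)$ at $t=0$ because $s_0=1$.

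\textbf{Forward variance structure.} For $u\ge t$, define
\begin{equation*}
\xi_t^u:=\mathbb{E}_t[v_u]=\xi_0(u)\exp\Big(-\tfrac{\eta^2}{2}\int_0^t 2H(u-s)^{2H-1}ds+\eta\sqrt{2H}\int_0^t (u-s)^{H-1/2}dW_s^2\Big),
\end{equation*}
using that $\int_t^u (u-s)^{H-1/2}dW^2_s$ is Gaussian and independent of $\mathcal F_t$ with variance $\int_t^u 2H(u-s)^{2H-1}ds$. This process is an exponential martingale in $t$, and It\^o's formula gives $d\xi_t^u=\lambda(t,u)\,dW_t^2$ with $\lambda(t,u)=\eta\sqrt{2H}(u-t)^{H-1/2}\xi_t^u$. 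It is positive and unique, being explicit. The integrability conditions in Hypotheses \ref{hypotheses} reduce to $\int_0^u (u-t)^{2H-1}(\xi_t^u)^2dt<\infty$ almost surely, which follows from $H>0$ and the path continuity of $\xi_\cdot^u$ on $[0,u)$ together with a bound near $t=u$. The continuity of $t\mapsto v_t$ follows from Kolmogorov's criterion applied to the Volterra process $\int_0^t (t-s)^{H-1/2}dW^2_s$, assuming $\xi_0$ is continuous.

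\textbf{Martingality of $Y$ (main obstacle).} Novikov's condition fails here because $v$ is lognormal. I would instead use the argument of Gassiat for rough Bergomi with $\rho\le 0$, which is why the statement restricts to $\rho\in[-1,0)$. Write $W^1=\rho W^2+\sqrt{1-\rho^2}W^\perp$ and localise with $\tau_n=\inf\{t:\int_0^t v_s ds\ge n\}$. Then $Y$ is a martingale if and only if $\mathbb{E}[Y_T]=1$, which holds if and only if $\widetilde{\mathbb{P}}^n(\tau_n\le T)\to 0$, where $\widetilde{\mathbb{P}}^n$ is the measure with density $Y_{T\wedge\tau_n}$. Under $\widetilde{\mathbb{P}}^n$, Girsanov turns $W^2$ into $\widetilde W^2_t+\rho\int_0^{t\wedge\tau_n}\sqrt{v_s}\,ds$. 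Because $\rho\le0$ and the kernel $(t-s)^{H-1/2}$ is positive, the added drift lowers the Volterra integral, so $v_t\le \xi_0(t)\exp(\eta\sqrt{2H}\int_0^t (t-s)^{H-1/2}d\widetilde W^2_s)$. This gives $\int_0^T v_s\,ds$ a $\widetilde{\mathbb{P}}^n$-law dominated uniformly in $n$ by a fixed almost surely finite random variable, so $\widetilde{\mathbb{P}}^n(\tau_n\le T)\to0$.

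\textbf{Conclusion.} With $Y$ a martingale, the hypotheses of the Theorem are met with $\xi_t^t=v_t$. The futures prices (\ref{futures_prices}) are therefore martingales starting from $F_0(T)$, for any initial curve $F_0(\cdot)$, which absorbs $\beta$ through $\beta(T)=\partial_T\ln F_0(T)-a(T)$.
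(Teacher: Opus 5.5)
Your proposal is correct and follows the paper's route. The paper likewise casts rBergomi as a forward variance model and then gets the martingality of $Y$ by citing Gassiat's Theorem 1.1; you write out that theorem's $\rho\le 0$ localisation-and-comparison argument correctly instead of quoting it, and your coefficient $\lambda(t,u)=\eta\sqrt{2H}(u-t)^{H-1/2}\xi_t^u$ is the right one (the $\eta^2$ in the paper's display is a slip).
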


\begin{proof}
    The rBergomi model is a special instance of the general forward variance model (\ref{eq:model_vol}) with
    \begin{equation*}
        \lambda(t,u) = \eta^2 \sqrt{2H} (u-t)^{H-0.5} \xi_t^u, \quad \forall u \geq t,
    \end{equation*}
    see \citet{bourgeypaper} for further information.

    Therefore, we just need to prove that the process $Y_t$, defined in Equation (\ref{eq:y_t}), is a martingale. This follows directly from Theorem 1.1 in \citet{gassiatpaper}. Note that, in order to use this result, it is necessary that $\rho<0$.
\end{proof}

\begin{corollary}[rHeston]\label{cor:rh}
For any function of maturity $F_0(\cdot)$, the system of SDEs
\begin{equation}\label{eq:model_rh}
	\begin{cases}
		ds_t = a(t)\cdot(1-s_t) \, dt + \sqrt{v_t}s_t \, dW_t^1, \quad s_0=1,\\
		v_t = V_0 + \dfrac{1}{\Gamma(H+0.5)} \displaystyle\int_0^t \dfrac{\kappa (\bar{v}(s) - v_s)}{(t-s)^{0.5-H}} \, ds + \dfrac{1}{\Gamma(H+0.5)} \displaystyle\int_0^t \dfrac{\eta \sqrt{v_s}}{(t-s)^{0.5-H}} \, dW_s^2,
	\end{cases}
\end{equation}
defines a model where future prices given by (\ref{futures_prices}) are martingales starting from $F_0(T)$, where $a(t)>0$ is the mean reversion speed of the fictitious spot price, $V_0$ is the initial variance, $H\in(0,1)$ is the Hurst parameter, $\kappa>0$ is the mean reversion speed of the variance, $\eta>0$ is the vol-vol, $\rho\in[-1,1]$ is the correlation between $W_t^1$ and $W_t^2$ and $\bar{v}(t)>0$ is the long variance given as a deterministic function continuous on $\mathbb{R}^+-\{0\}$ satisfying
\begin{equation*}
    \bar{v}(t) \geq \dfrac{-V_0}{\kappa \Gamma(0.5 - H)}t^{-(H+0.5)}, \quad \forall t >0, 
\end{equation*}
and for all $\varepsilon>0$ there exists $K_\varepsilon > 0 $ such that
\begin{equation}\label{longvar_cond}
    \bar{v}(t) \leq K_\varepsilon t^{-0.5-\varepsilon}, \quad \forall t\in(0,1].
\end{equation}
\end{corollary}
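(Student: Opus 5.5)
Following the proof of Corollary \ref{cor:rb}, the plan has two steps. First, we place the rough Heston variance in the general forward variance framework (\ref{eq:model_vol}) and check Hypotheses \ref{hypotheses}. Second, we show that the stochastic exponential $Y_t$ in (\ref{eq:y_t}) is a true martingale. The theorem above then gives the representation (\ref{model}), in which $F_t(T)$ is an affine function of $s_t$. Since $s_t$ solves the linear SDE with $\beta(T)=\partial_T\ln F_0(T)-a(T)$, we get $F_t(T)=\mathbb{E}_t[F_0(T)s_T]$. This is a martingale, and it equals $F_0(T)$ at $t=0$ because $s_0=1$.

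\textbf{Step 1: the rHeston variance as a forward variance model.} Write $K(t)=t^{H-1/2}/\Gamma(H+1/2)$. Let $R_\kappa$ be the resolvent of $\kappa K$, that is, $R_\kappa = \kappa K - \kappa K * R_\kappa$. Taking conditional expectations in the Volterra equation gives the forward variance curve
\[
\xi_t^u=\mathbb{E}_t[v_u]=\xi_0(u)+\int_0^t \kappa^{-1}R_\kappa(u-s)\,\eta\sqrt{v_s}\,dW^2_s ,
\]
so that $\lambda(t,u)=\kappa^{-1}R_\kappa(u-t)\,\eta\sqrt{v_t}$, as in \citet{bourgeypaper}. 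Here $\xi_0$ is the deterministic curve generated by $V_0$ and $\bar v$. The square-integrability and local-integrability conditions on $\lambda$ follow from $R_\kappa(t)\sim \kappa K(t)$ near $0$ and from $H>0$. For $\xi_0\in L^1_{\rm loc}$ we use the growth bound (\ref{longvar_cond}). The key hypothesis is existence of a unique, nonnegative, continuous weak solution of the Volterra SDE with the time-dependent, possibly singular input $V_0+\int_0^t K(t-s)\kappa\bar v(s)\,ds$. We would invoke the results of Abi Jaber and El Euch on stochastic Volterra equations with admissible input curves. Their admissibility conditions are exactly the lower bound on $\bar v$ and the bound (\ref{longvar_cond}): these make the input curve continuous and keep the drift from pushing the solution negative. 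This is where the two stated conditions on $\bar v$ enter.

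\textbf{Step 2: martingality of $Y$.} We would use the affine structure, and this is the step where the work lies. The joint process $(\log(Y),v)$ is an affine Volterra process, and $Y=\mathcal{E}(\int\sqrt{v}\,dW^1)$. A standard route is the localization argument of Abi Jaber, Larsson and Pulido. Define $\tau_n=\inf\{t:v_t>n\}$; the stopped process $Y^{\tau_n}$ is a martingale by Novikov. Change measure with $Y^{\tau_n}$. Under the new measure, $v$ solves the same Volterra equation with drift $\kappa(\bar v-v)+\rho\eta v$, which still has linear growth. Uniform moment bounds for linear-growth Volterra equations give $\mathbb{Q}^n(\tau_n\le T)\to0$, hence $\mathbb{E}[Y_T]=\lim_n \mathbb{Q}^n(\tau_n>T)=1$. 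The argument needs only linear growth, so it works for every $\rho\in[-1,1]$, in contrast to the rBergomi case. The delicate point is the moment estimate under $\mathbb{Q}^n$, because $\bar v$ may blow up at $0$. We would handle it by again using the integrability of $K*\bar v$ given by (\ref{longvar_cond}).
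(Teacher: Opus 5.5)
Your proposal is correct and follows the same route as the paper. Your kernel $\lambda(t,u)=\kappa^{-1}R_\kappa(u-t)\,\eta\sqrt{v_t}$ coincides with the paper's $\eta(u-t)^{H-0.5}E_{H+0.5,H+0.5}(-\kappa(u-t)^{H+0.5})\sqrt{\xi_t^t}$, because $R_\kappa(t)=\kappa t^{H-0.5}E_{H+0.5,H+0.5}(-\kappa t^{H+0.5})$ and $\xi_t^t=v_t$, and the conclusion then reduces to the martingality of $Y$. The only difference is that the paper simply cites \citet[Proposition B.1]{eleuchpaper} for that step, while you sketch the Abi Jaber--Larsson--Pulido localization and change-of-measure argument; this is a valid substitute for every $\rho\in[-1,1]$ (stated precisely, it yields $\mathbb{E}[Y_T]\geq\mathbb{Q}^n(\tau_n>T)\to 1$, which together with the supermartingale bound $\mathbb{E}[Y_T]\leq 1$ gives $\mathbb{E}[Y_T]=1$).
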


\begin{proof}
    The rHeston model can be rewritten as the general model (\ref{eq:model_vol}) with
    \begin{equation*}
        \lambda(t,u) = \eta (u-t)^{H-0.5} f_H\left(-\kappa(u-t)^{H+0.5}\right) \sqrt{\xi_t^t},
    \end{equation*}
    where $\xi_0^u$ is given by the fractional equation
    \begin{equation*}
        \xi_0^u = V_0 + \dfrac{\kappa}{\Gamma(H + 0.5)} \int_0^u (u-s)^{H-0.5}(\bar{v}(s) - \xi_0^s) \, ds,
    \end{equation*}
    $f_H(z) := E_{H+0.5, H+0.5}(z)$ and $E_{\alpha, \beta}(z)$ is the two-parameter Mittag-Leffler function. As previously, see \citet{bourgeypaper} for further information.

    The proof of the martingality of the process $Y_t$, defined in (\ref{eq:y_t}), can be found in \citet[Proposition B.1]{eleuchpaper}. Note that, in contrast to the rBergomi model, in the rHeston model it is not necessary to impose $\rho<0$.
\end{proof}

\begin{remark}
In the previous proof, we rewrote the rHeston volatility model as a forward variance model. The same procedure can be carried out in the opposite direction, provided that the initial forward variance curve $\xi_0$ satisfies certain conditions. Indeed, consider the metric space
\begin{equation*}
    \begin{aligned}
        \mathcal{V}_{H,\kappa} = &\{ \xi:\mathbb{R}^+ \rightarrow \mathbb{R}^+, \quad\exists v_\xi \text{ continuous on } \mathbb{R}^+ \text{ such that}\\
        &\xi(t) = \int_0^t \dfrac{s^{-(H+0.5)}}{\Gamma(0.5 - H)}  (t-s)^{H-0.5} f_H(-\kappa(t-s)^{H-0.5}) v_\xi(s) \, ds\},
    \end{aligned}
\end{equation*}
with $f_H$ as defined in the proof of Corollary \ref{cor:rh}, and its subset
\begin{equation}\label{admissible_fwdvar}
    \begin{aligned}
        \mathcal{V}_{H,\kappa}^+ = &\{ \xi\in\mathcal{V}_{H,\kappa}, \quad v_\xi > 0 \text{ and for any } t>0, \quad \\&v_\xi(t) = \xi(0) + t^{H+0.5} \kappa \Gamma(0.5-H)v_\xi^0(t), \quad v_\xi^0 \text{ satisfies } (\ref{longvar_cond})\},
    \end{aligned}
\end{equation}
which is the set of all admissible forward variance curves produced by the rHeston volatility model \citep[Section 4]{eleuchpaper}.
Given $\xi_0 \in  \mathcal{V}_{H,\kappa}^+$, we can then recover the long variance $\bar{v}$ as
\begin{equation*}
    \kappa\bar{v}(t) +V_0 \dfrac{t^{-(H+0.5)}}{ \Gamma(0.5-H)}  = D^{H+0.5}\xi_0(t) + \xi_0(t), \quad t>0,
\end{equation*}
where $D^rf$ is the Riemann-Lioville fractional derivative of order $r$, as in \citet[Proposition 3.1]{eleuchpaper} and \citet[Section 5.3]{bourgeypaper}.
\end{remark}

Note that we can also consider the classical Heston and 1-factor Bergomi models, since the rough models reduce to their classical counterpart when $H=0.5$.

\subsection{Samuelson effect}\label{sec:samuelson}

    In his seminal paper of 1965, \citet{samuelsonpaper} hypothesised that the volatility of futures price changes should increase when approaching the delivery date, a stylised fact now known as the Samuelson effect. Samuelson did not provide an empirical analysis nor formal proofs, and in the following years several authors analysed the applicability of this hypothesis in several markets, in particular commodities. Several explanations were proposed, and in 1995 Bessembinder et al. \citep{bessembinderpaper} linked the Samuelson effect to the finite inventory of commodities and the mean reversion of spot prices, which has been tested empirically in several commodity markets (such as crude oil, and specially agricultural markets) with positive results, see for instance \citet{hopaper}.

    We remark that the previously cited literature on the Samuelson effect mainly considers the behaviour of the realised volatility, not the implied volatility. Nonetheless, in order to estimate the term structure of futures volatility and price futures contracts effectively, it is reasonable to consider models in which the dynamics of the implied volatility behaves as the Samuelson effect dictates.

    Our model (\ref{model}) is able to reproduce this stylised fact by tuning the mean reversion speed $a(t)$. Indeed, in Figure \ref{fig_samuelson} we represent the relationship between mean reversion speed and the ATM volatility level for several model-generated (that is, non-traded) options on the same underlying futures but with different maturity dates. In absence of mean reversion (that is, for $a=0$) there is no Samuelson effect, but it becomes more prominent for higher mean reversion speeds: the ATM implied volatility level increases when approaching the maturity date. 

    We observe that the ATM volatility level exhibits a downward tendency for mean reversion speeds less than 0.5. This can be related to the seasonality of  commodity prices: for $a=0$, the plot indicates that the calibration data used was retrieved during a period of decreasing volatility, but the Samuelson effect counterforces this trend as the mean reversion speed increases. Figure \ref{fig_samuelson} essentially represents two effects: the Samuelson effect and the effect of seasonality on vanilla option prices.
 
    \begin{remark}
        In order to obtain Figure \ref{fig_samuelson}, we consider market data on futures options with fixed maturities $T_{opt}$ and $T_{fut}$. Specifically, we consider WTI Crude Oil futures options traded on the NYMEX on the 14th of March 2025 with futures ticker CLU5, as described in Table \ref{tab:1st}. Next, we calibrate the model (\ref{eq:model_rb}) using a fixed mean reversion speed and compute the ATM implied volatility of model-generated options for different values of $T_{opt}$.
    \end{remark}

\begin{figure}%
    \centering
	\includegraphics[width=0.55\textwidth]{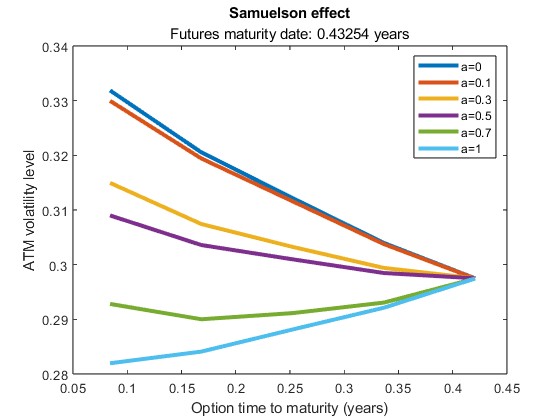}
	\caption{ATM implied volatility with respect to the options maturity date of a simulated mid-curve option for different mean reversion speeds. The Samuelson effect is more pronounced as the mean reversion speed increases.}\label{fig_samuelson}%
\end{figure}

\section{Simulation and calibration}\label{sec:calibration}

From now on, we will consider the rBergomi  (\ref{eq:model_rb}) and the rHeston (\ref{eq:model_rh}) models simulated using a Monte Carlo scheme. The dynamics of the spot price $s_t$ can be simulated using the Euler-Maruyama scheme. However, the simulation of the variance process is not as direct due to the presence of the fractional Brownian motion.

In this work, we use the Hybrid Scheme proposed in \citet{pakkanenpaper} to simulate the rBergomi model efficiently, and the Hybrid Quadratic Scheme introduced by Gatheral in \citet{hqepaper} to simulate the rHeston model, both of which have been implemented in MATLAB R2024b. We refer the reader to the aforementioned references \citet{pakkanenpaper} and \citet{hqepaper} for a more detailed discussion.

\subsection{Initial forward variance curve model}\label{sec:xi}
The initial forward variance curve $\xi_0(t)$ must be known prior to simulation of the rBergomi (\ref{eq:model_rb}) and rHeston (\ref{eq:model_rh}) models. In this work, we assume that this curve follows a given parametric model and calibrate its parameters.

Several different models were considered, such as piecewise constant, piecewise linear, and the parametric models proposed in \citet{buehlerpaper}. We conducted several tests and conclude that the qualitative results were similar regardless of the specific initial forward variance curve model. Based on these results, we consider a piecewise linear initial forward variance curve. This continuous model offers enough flexibility to accommodate any reasonable setting, while the straightforward interpretation of its parameters enables efficient sequential calibration.

Note that the set $\mathcal{V}_{H,\kappa}^+$ of admissible initial forward variance curves for the rHeston volatility model, defined in (\ref{admissible_fwdvar}), does not contain piecewise linear functions except purely linear ones. Nevertheless, for the practical reasons discussed above, we consider a piecewise linear initial forward variance curve. This choice is consistent with the empirical literature, where non admissible forward variance curves, such as piecewise constant curves, are commonly employed, see for instance \citet{baschettipaper}. Since the rHeston volatility model cannot be recovered from these formulations, these may be interpreted as practical approximations of the original model.

\subsection{Mean reversion speed $a(t)$}\label{sec:mrs}
It is not possible to calibrate the mean reversion speed $a$ (which we will assume constant from now on) using only vanilla options, since these instruments are not sensitive enough to changes in this parameter. Indeed, in our tests we were able to calibrate model (\ref{eq:model_rb}) for different fixed values of $a$ between $0.3$ and $0.7$.

This limitation was already present in the model on which our framework is based (see \citet[Section 4.4]{pallavicinipaper}), where the authors addressed it by enriching the calibration set, that is, by considering other kinds of options contracts (such as mid-curve options) alongside vanilla options. In this work, we choose to fix the mean reversion speed at $a=0.5$, which is a value confirmed to be reasonable through our experiments. Although not ideal, this approach produces satisfactory results when calibrating to vanilla options, as shown in Section \ref{sec:results}. A more systematic methodology for calibrating $a$ remains an open question for future investigation.

\subsection{Calibration scheme}
We calibrate the general model (\ref{model}) to market data of vanilla options on WTI Crude Oil futures following a nested calibration procedure. We calibrate the constant parameters $\Theta$ of the model (that is, all parameters except the mean reversion $a$ and the initial forward variance curve $\xi_0(t)$, whose calibration we will explain shortly later on) by minimising the following cost function:
\begin{equation}
		\begin{aligned}
                &L(\Theta) = \sum_i L_i(\Theta),\\
				&L_i(\Theta) = \dfrac{1}{\omega_i} \sum_{j\in\mathcal{J}_i} \omega_{i,j} \left| \sigma_{i,j}^{mkt} - \sigma_{i,j}^{m} \right| + \sum_{j\in\mathcal{J}_i} \mathbf{1}_{\left\{| \sigma_{i,j}^{mkt} - \sigma_{i,j}^{m} | > 0.03\right\}} \left| \sigma_{i,j}^{mkt} - \sigma_{i,j}^{m} \right|,\\
				&\omega_{i,j} := \dfrac{vol_{i,j}}{\max(0.01, BA_{i,j})}, \quad \omega_i := \sum_{j\in\mathcal{J}_i} \omega_{i,j},
			\end{aligned}
    \label{cost1}
\end{equation}
where the index $i$ cycles through all maturities considered for the calibration, and the index $j\in\mathcal{J}_i$ through all the available strike prices for maturity $i$; $vol_{i,j}$ is the trading volume of the contract with maturity $i$ and strike $j$, $BA_{i,j}$ is the corresponding bid-ask spread, $\sigma^{mkt}_{i,j}$ denotes its quoted implied volatility, and $\sigma^{m}_{i,j}$ the implied volatility calculated from the prices generated by the model. The symbol $\mathbf{1}$ denotes the indicator function.
    
For each evaluation of the cost function (\ref{cost1}), we calibrate the levels of the initial forward variance curve $\xi_0(t)$ sequentially using the bisection method, that is, one level per maturity at a time, by minimising the distance between the model-generated volatility and the quoted volatility at the ATM strike. This choice is justified by the fact that different forward variance curves primarily affect the volatility smiles as a vertical shift. By exploiting this property, we are able to to drastically reduce the dimensionality of the optimisation space.

The purpose of the second term in the loss function $L$ defined in (\ref{cost1}) is to penalise volatility smiles that achieve very high accuracy near the ATM strike but exhibit large errors in the ITM or OTM regions. Without this term, such smiles would obtain an excessively low loss value, since the trading volume of ATM options is considerably larger, as will be shown in the next section. This penalty term is only applied when the distance between the model-generated and quoted implied volatility exceeds 0.03. After testing possible cut-off values, we found that this value achieves the previously pointed objective, and leads to the calibration results obtained in Section \ref{sec:results}. More systematic approaches could be considered, for example, allowing the cut-off value to depend on the bid-ask spread.
    
Different cost functions, such as MSE or relative error, were also considered and we did not find any significant qualitative differences.

We solve the optimisation problem defined by the cost function (\ref{cost1}) with the surrogate model optimiser \texttt{surrogateopt} provided by the Global Optimization Toolbox in MATLAB 2024b. Afterwards, we use the local optimiser \texttt{fmincon} to further improve the precision of the estimation.
    
\section{Numerical results}\label{sec:results}

\subsection{Estimation of the Hurst parameter}

In this section, we estimate the Hurst parameter of WTI Crude Oil futures prices traded on the NYMEX. We will proceed with the methodology presented in \citet{gatheralpaper}, which we summarise below.

Given $q,\Delta>0$, we use daily realised volatility proxies from 5-min returns at times $t=0,\Delta,2\Delta,\dots,N\Delta$, where $N:=\lfloor T/\Delta \rfloor$, to calculate
\begin{equation*}
    m(q, \Delta) := \dfrac{1}{N} \sum_{k=1}^N |\log RV_{k\Delta} - \log RV_{(k-1)\Delta}|^q.
\end{equation*}

Assuming that the increments of the log-volatility process are stationary and that a law of large numbers holds, then the quantity $m(q,\Delta)$ can be seen as the empirical counterpart of
\begin{equation*}
    \mathbb{E}\left[ |\log RV_{\Delta} - \log RV_0|^q \right],
\end{equation*}
and the log-volatility increments enjoy the following property:
\begin{equation*}
    \mathbb{E}\left[ |\log RV_{\Delta} - \log RV_0|^q \right] = K_q \Delta^{Hq}.
\end{equation*}

Thus, it is reasonable to assume that $m(q,\Delta) \propto \Delta^{Hq}$ and we can estimate $H$ by means of a linear regression of $\log m(q,\Delta)$ on $\log \Delta$ for several values of $\Delta$.

\begin{table}[!t]
\centering
\begin{tabular}{cccc}
\toprule
ID   & Maturity Month & \begin{tabular}[c]{@{}l@{}}Raw data count\\ (10-sec returns)\end{tabular} & \begin{tabular}[c]{@{}l@{}}Clean data count\\ (5-min returns)\end{tabular} \\ 
\midrule
CLQ5 & August 2025    & 365962 & 54102 \\ 
CLU5 & September 2025 & 318379& 56623  \\
CLV5 & October 2025   & 154334  & 56613 \\
CLX5 & November 2025  & 83344 & 56624 \\
CLZ5 & December 2025  & 341052 & 56624 \\
CLM6 & June 2026      & 95027 & 56602 \\
CLZ6 & December 2026  & 143435 & 56623 \\
\bottomrule
\end{tabular}
\caption{Futures contracts on WTI Crude Oil traded in the NYMEX, considered for the estimation of the Hurst parameter.}\label{tab:hurst}
\end{table}

We consider market data on WTI Crude Oil futures traded on the NYMEX from 16th January 2025 to 31st July 2025. This data set consists of price observations for seven different futures contracts, aggregated into 10-second bins. After cleaning the data, 5-min returns were used to construct daily proxies for realised volatility, in order to mitigate microstructure noise\footnote{2-min and 10-min returns were also tested and no qualitative difference was found.}. Table \ref{tab:hurst} collects the main characteristics of the considered futures contracts.
\begin{figure}[h!]%
	\centering
	\subfloat{
		{\includegraphics[width=0.45\textwidth]{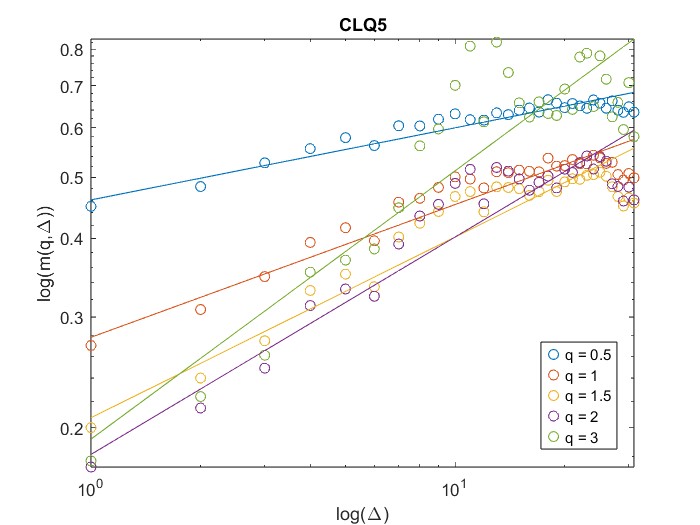} }
	}%
	\subfloat{
		{\includegraphics[width=0.45\textwidth]{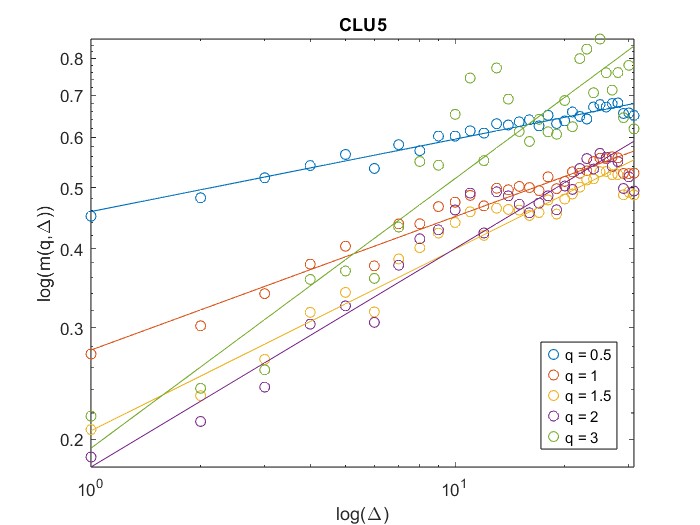} }
	}%
	
	\subfloat{
		{\includegraphics[width=0.45\textwidth]{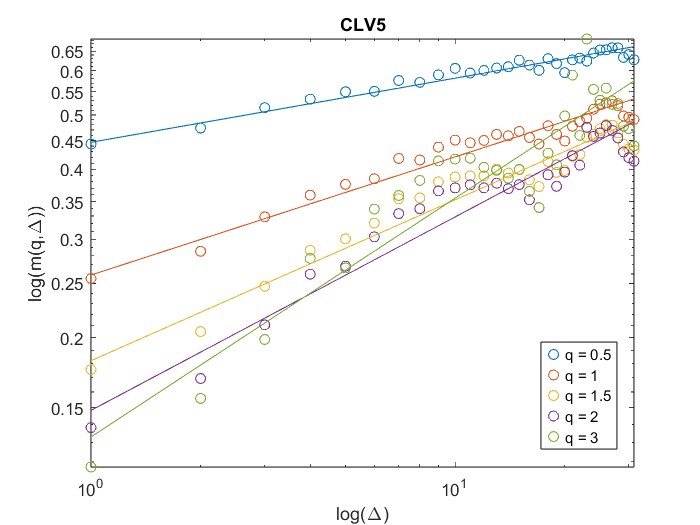} }
	}%
	\subfloat{
		{\includegraphics[width=0.45\textwidth]{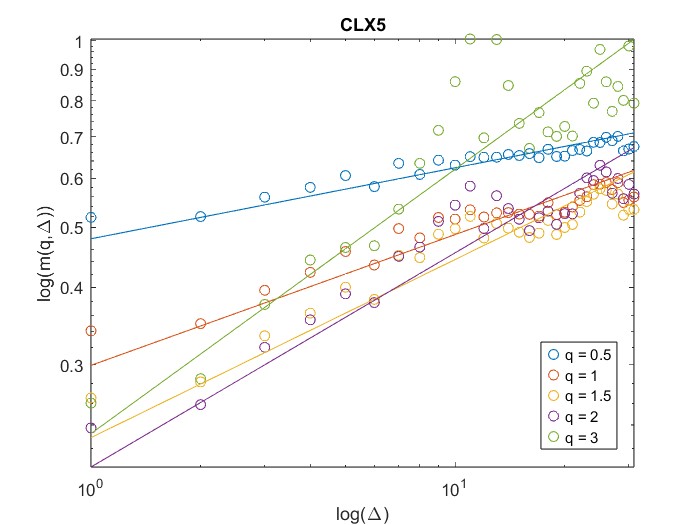} }
	}%
	
	\subfloat{
		{\includegraphics[width=0.45\textwidth]{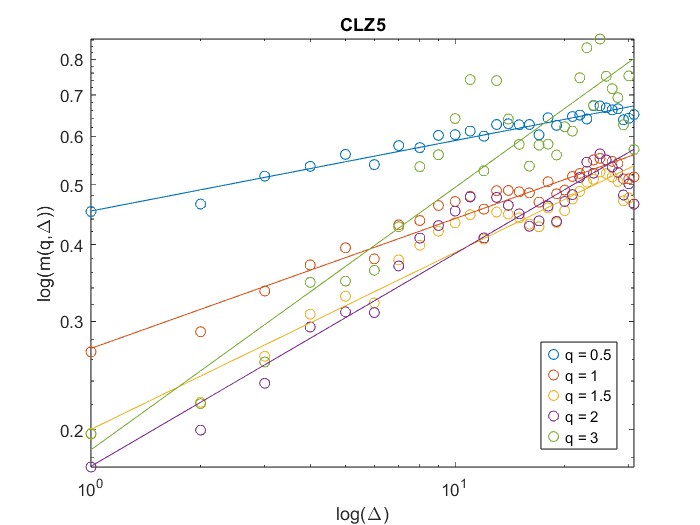} }
	}%
	\subfloat{
		{\includegraphics[width=0.45\textwidth]{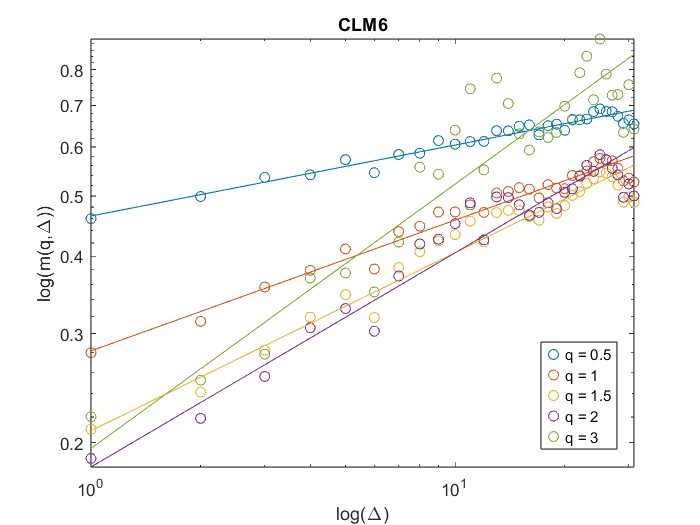} }
	}%
	
	\subfloat{
		{\includegraphics[width=0.45\textwidth]{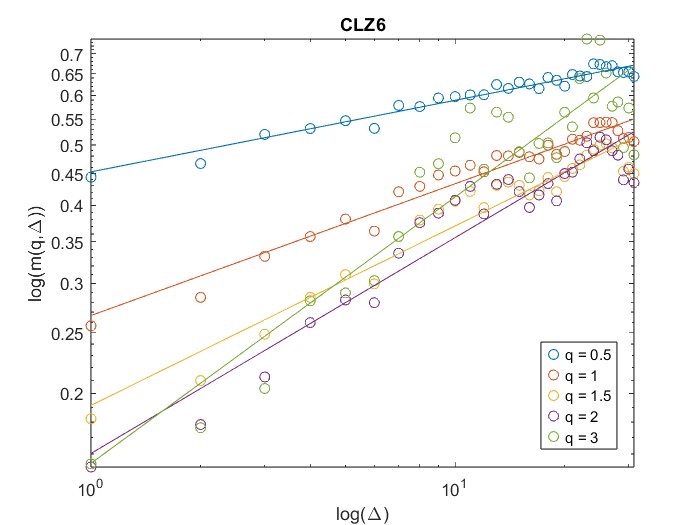} }
	}%
	\caption{linear regression of $\log m(q,\Delta)$ on $\log \Delta$ for the WTI Crude Oil futures contracts listed in Table \ref{tab:hurst}.}%
	\label{fig:linear_reg}%
\end{figure}
The resulting data set is considerably smaller than the one used in the original paper \citep{gatheralpaper} due to the idiosyncrasies of the WTI Crude Oil market. In order to enlarge our effective sample, we assume a single common Hurst parameter $H$ for all contracts considered. This assumption is supported empirically, since the estimated Hurst parameters for each individual futures contract were found to be very similar.

Following the suggestion in \citet{gatheralpaper}, we compute $m(q,\Delta)$ for $q\in\{0.5,1,1.5,2,3\}$ and $\Delta = \{1,2,\dots,31\}$. The resulting linear regression can be found in Figure \ref{fig:linear_reg}, and the estimated Hurst parameter ranges from $0.14$ to $0.21$ depending on the value of $q$. We note that our results are considerably noisier than those reported in \citet{gatheralpaper}: this could be attributed to the limited size of our data set. Nonetheless, we also remark that our results show a clear presence of rough behaviour in the WTI Crude Oil market.

\citet{alfeuspaper} also studied the Hurst parameter of several commodities and found that the Hurst parameter for all the considered commodities was less than 0.2, that is, these commodities prices showed rough behaviour. In particular,  they obtained $H=0.153$ for WTI Crude Oil prices, which is consistent with the results we obtain and present in this section. Nonetheless, we consider futures contracts with different maturities as distinct assets, and then constrain the Hurst parameter $H$ to be the same for all. In contrast, Alfeus et al. performed a single regression on the time series of the front futures contract.

\subsection{Calibration data}

We calibrate the rBergomi and rHeston models to market data of futures options on WTI Crude Oil traded on the NYMEX (New York Mercantile Exchange) on 14th March 2025 and 4th June 2025. The maturity dates of the contracts within each data set are collected in Table \ref{tab:1st} and \ref{tab:2nd}. The data includes bid and ask prices for both call and put options across a range of strikes and maturities, as well as the implied volatility and trading volume for each contract. Not all available contracts were considered: instead, we used the volume data to select the most reliable and liquid contracts.

\begin{table}[!hb]
			\centering
			\begin{tabular}{ccc}
						\toprule
						Futures ticker & $T_{opt}$ & $T_{fut}$\\
                        \midrule
                        CLJ5 & 17th March 2025 & 20th March 2025\\ 
						CLK5 & 16th April 2025 & 22nd April 2025\\ 
						CLM5 & 15th May 2025 & 20th May 2025\\ 
						CLN5 & 16th June 2025 & 20th June 2025\\ 
						CLQ5 & 17th July 2025 & 22nd July 2025\\ 
						CLU5 & 15th August 2025 & 20th August 2025\\ 
						CLZ5 & 17th November 2025 & 20th November 2025\\
                        \bottomrule
			\end{tabular}
            \caption{Maturity dates of the WTI Crude Oil futures option contracts considered and their respective underlying futures contract, traded on NYMEX on the 14th of March 2025.}\label{tab:1st}
\end{table}
\begin{table}[!ht]
	\centering
	\begin{tabular}{ccc}
		\toprule
		Futures ticker & $T_{opt}$ & $T_{fut}$\\
        \midrule
		CLN5 & 16th June 2025 & 20th June 2025\\
		CLQ5 & 17th July 2025 & 22nd July 2025\\
		CLU5 & 15th August 2025 & 20th August 2025\\
		CLV5 & 17th September 2025 & 22nd September 2025\\
		CLX5 & 16th October 2025 & 21st October 2025\\
		CLZ5 & 17th November 2025 & 20th November 2025\\
        \bottomrule
	\end{tabular}
    \caption{Maturity dates of the WTI Crude Oil futures option contracts considered and their respective underlying futures contract, traded on NYMEX on the 4th of June 2025.}\label{tab:2nd}
\end{table}

It is important to note that only American options are quoted in these data sets. For a given strike and maturity date, the prices of an American and European option generally differ, hence it would be necessary to implement an algorithm to price American options under the considered models. Nevertheless, for calibration purposes, we assume that the quoted implied volatility for American options is also valid for European options. In order to assess the validity of this assumption, we computed the implied volatility smiles obtained by inverting the American market prices using the Black-Scholes formula for European options. For both considered data sets, the difference between the resulting European volatility smiles and the quoted American smiles is empirically negligible compared to bid-ask spreads, thereby supporting the validity of our assumption.

\subsection{Calibration results: 14th March 2025}\label{sec:dataset1}

We now examine the results obtained from calibrating  the rBergomi  and rHeston models to the data set summarised in Table \ref{tab:1st}. For comparison, we also report the results obtained by calibrating the model (\ref{eq:model}) with both classical Bergomi and Heston models, that is,
\begin{equation}\label{eq:model_b}
	\begin{cases}
		ds_t = a\cdot(1-s_t) \, dt + \sqrt{v_t}s_t \, dW_t^1, \quad s_0=1,\\
		v_t = \xi_0(t)\exp\left( \eta X_t - \dfrac{1}{2}\eta^2 \texttt{Var}(X_t) \right),\\
        dX_t = -\kappa X_t \, dt + dW_t^2, \quad X_0=0,
	\end{cases}
\end{equation}
and
\begin{equation}\label{eq:model_h}
	\begin{cases}
		ds_t = a\cdot(1-s_t) \, dt + \sqrt{v_t}s_t \, dW_t^1, \quad s_0=1,\\
		dv_t = \kappa (\bar{v}(t) - v_t) \, dt + \eta\sqrt{v_t} \, dW_t^2, \quad v_0 = \hat{v}_0
	\end{cases}
\end{equation}
respectively, where $\kappa>0$ is the mean reversion speed, $\eta$ is the vol-vol, $\xi_0(t)$ is the initial forward variance curve, $\bar{v}(t)$ is the long-term variance and $\hat{v}_0$ the initial variance for the Heston model. We simulate and calibrate these models using an Euler-Maruyama scheme, and assume that $\bar{v}(t)$ is given by a piecewise linear function, which we treat in a similar manner to the initial forward variance curve described in Section \ref{sec:xi}. Since all other models considered feature a time-dependent parameter, we naturally extend the classical Heston model by introducing a time-dependent long-term variance.

Throughout the remainder of the tests, $N$ will denote the number of trajectories simulated and $n$ the inverse of the step size.

We note that the maturity date of the first contract lies only two days after the observation date. Therefore, in order to simulate this contract accurately, we would need to choose a very small time step size $1/n$, which would increase the computational time significantly. While this is not problematic for the classical models, the rough Bergomi and rough Heston models are substantially more computationally demanding, and thus a direct approach is impractical. In order to overcome this issue, we use a dual mesh approach: we simulate the first contract on a very fine mesh with $n=2000$ and the rest of the contracts on a coarser mesh with $n=300$. We consider a single mesh with $n=2000$ for the classical models, and simulate $N=100000$ trajectories for all the models considered.

\begin{remark}
    One possible way to avoid the dual mesh approach and speed up the calculations significantly would be to approximate the price of plain vanilla options by neural networks \citep{baschettipaper}. We are aware of this possibility, however we do not consider this relevant to the purpose of the paper, namely comparing different models.
\end{remark}

The optimised time-independent parameters are shown in Table \ref{tab:res_param1}. In Figure \ref{fig:res_param1} we can observe the time-dependent parameters of each model, namely the initial forward variance curve $\xi_0(t)$ for the Bergomi, rBergomi and rHeston models and the long-term variance $\bar{v}(t)$ for the Heston model.

\begin{table}[!htbp]
\centering
\begin{tabular}{lccccc}
\toprule
Volatility model   & $H$    & $\eta$  & $\rho$  & $\kappa$ & $\hat{v}_0$ \\
\midrule
rBergomi         & 0.0778 & 2.1617  & -0.3087 & --       & --          \\
rHeston          & 0.2774 & 2.0567  & -0.2017 & 5.6187   & --          \\
1-factor Bergomi & --     & 16.4983 & -0.2108 & 46.6008  & --          \\
Heston           & --     & 9.9747  & -0.2004 & 42.8659  & 0.0405      \\
\bottomrule
\end{tabular}
\caption{Optimised time-independent parameters of the models calibrated to the data from 14th March 2025.}\label{tab:res_param1}
\end{table}

\begin{figure}[ht!]%
	\centering
	\subfloat[rBergomi: $\xi_0(t)$]{
		{\includegraphics[width=0.45\textwidth]{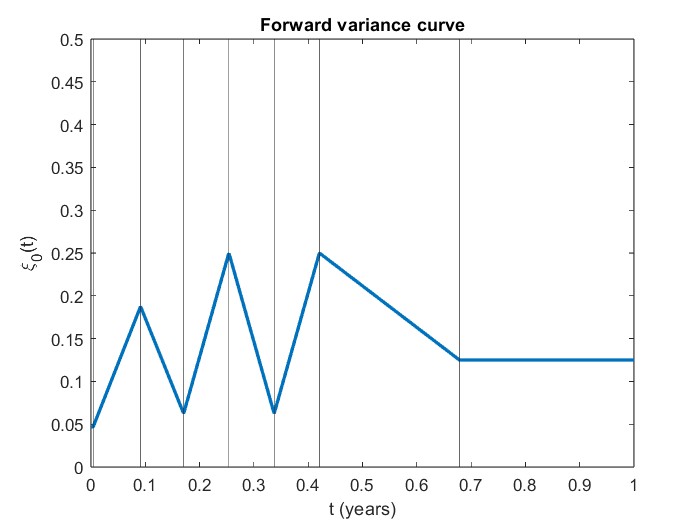} }
	}%
	\subfloat[rHeston: $\xi_0(t)$]{
		{\includegraphics[width=0.45\textwidth]{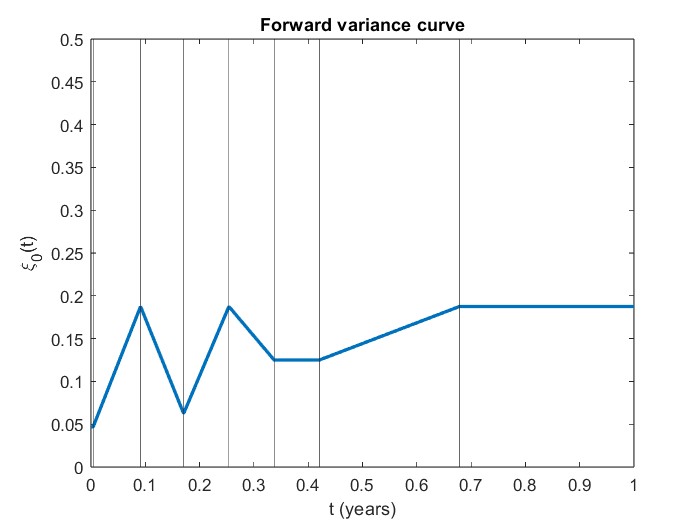} }
	}%
	
	\subfloat[Bergomi: $\xi_0(t)$]{
		{\includegraphics[width=0.45\textwidth]{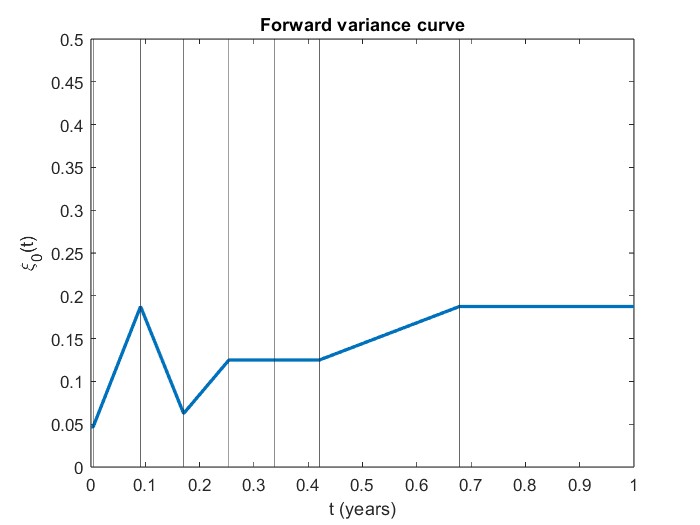} }
	}%
	\subfloat[Heston: $\bar{v}(t)$]{
		{\includegraphics[width=0.45\textwidth]{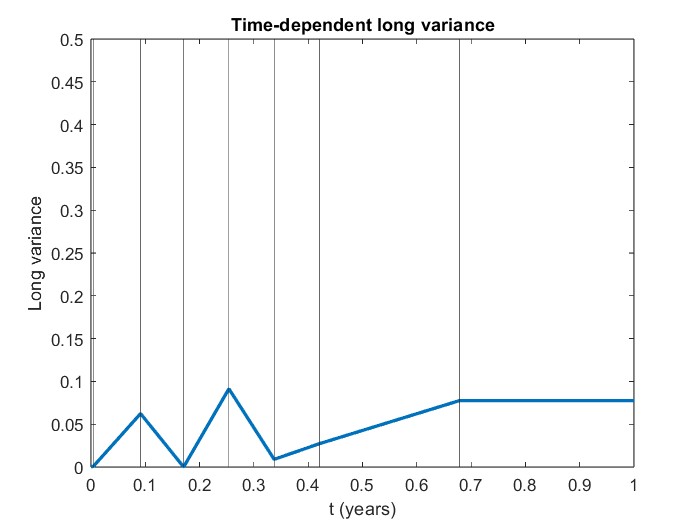} }
	}%
	
	\caption{Optimised time-dependent parameter of each model calibrated to the data from 14th March 2025. The vertical lines mark the maturity dates considered.}%
	\label{fig:res_param1}%
\end{figure}


In Figures \ref{fig:smile_rb1} to \ref{fig:smile_h1} we can observe the volatility smiles generated by the models and compare them to the market smiles. We also plot the trading volume, which is represented as a green bar plot in the background, and indicate the loss value $L_i(\Theta_{\text{opt}})$ obtained from each single smile.

\begin{figure}[!p]%
	\centering
	\subfloat[Loss $= 0.0780$]{
		{\includegraphics[width=0.45\textwidth]{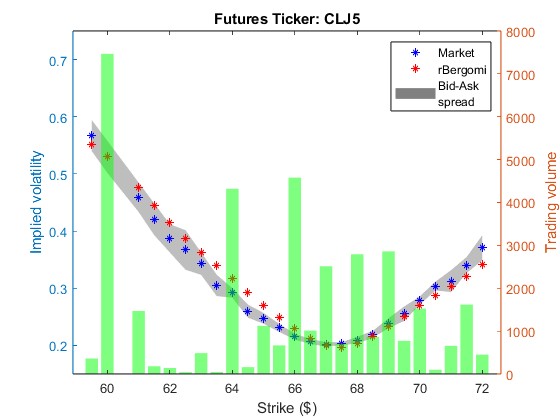} }
	}%
	\subfloat[Loss $= 0.0077$]{
		{\includegraphics[width=0.45\textwidth]{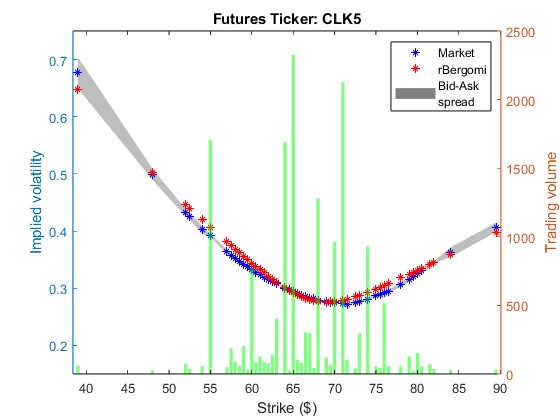} }
	}%
	
	\subfloat[Loss $= 0.0131$]{
		{\includegraphics[width=0.45\textwidth]{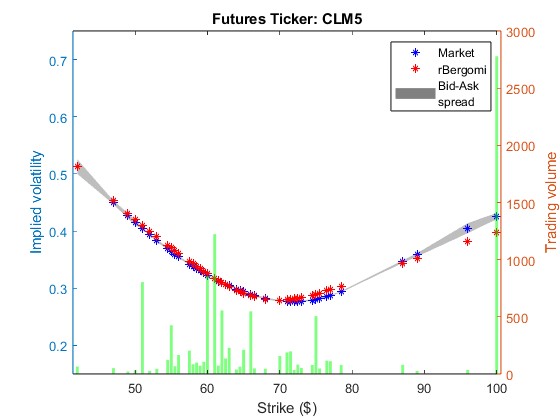} }
	}%
	\subfloat[Loss $= 0.0061$]{
		{\includegraphics[width=0.45\textwidth]{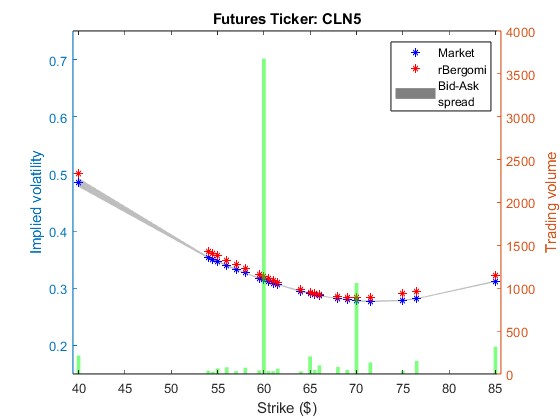} }
	}%
	
	\subfloat[Loss $= 0.0078$]{
		{\includegraphics[width=0.45\textwidth]{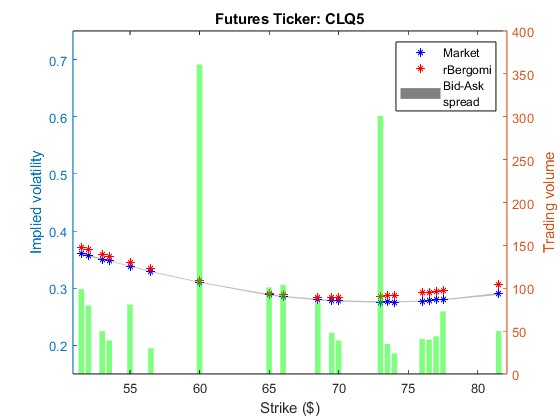} }
	}%
	\subfloat[Loss $= 0.0033$]{
		{\includegraphics[width=0.45\textwidth]{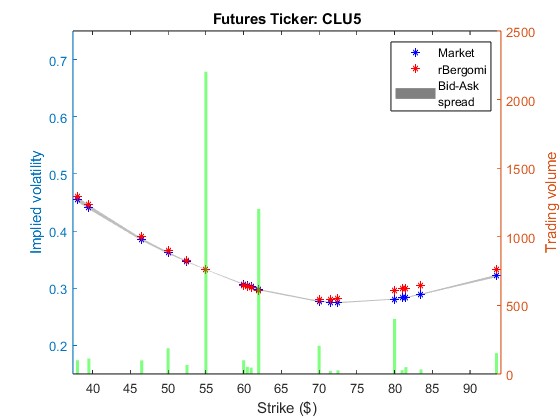} }
	}%
	
	\subfloat[Loss $= 0.0095$]{
		{\includegraphics[width=0.45\textwidth]{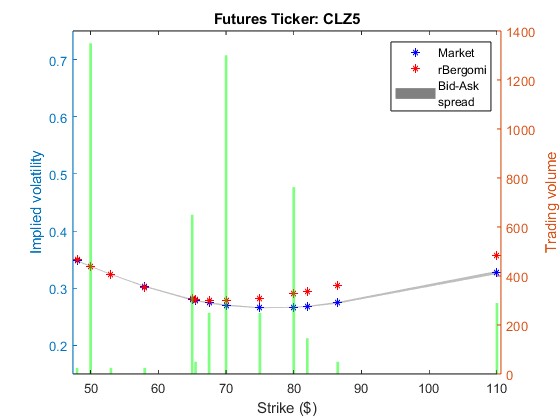} }
	}%
	\caption{Calibration results of the rBergomi model (\ref{eq:model_rb}) for the data from 14th March 2025.}%
	\label{fig:smile_rb1}%
\end{figure}


\begin{figure}[!p]%
	\centering
	\subfloat[Loss $= 0.3184$]{
		{\includegraphics[width=0.45\textwidth]{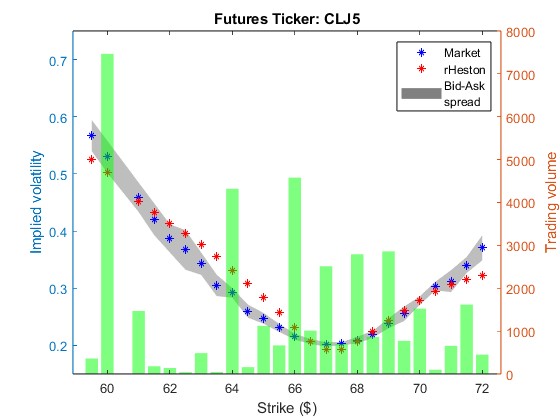} }
	}%
	\subfloat[Loss $= 0.0631$]{
		{\includegraphics[width=0.45\textwidth]{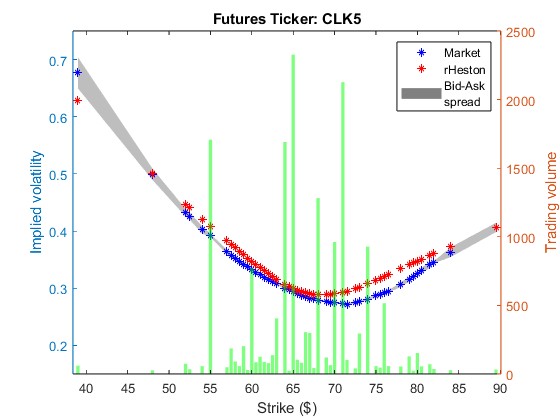} }
	}%
	
	\subfloat[Loss $= 0.0061$]{
		{\includegraphics[width=0.45\textwidth]{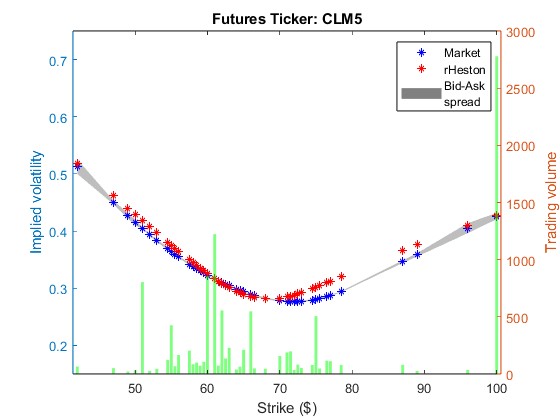} }
	}%
	\subfloat[Loss $= 0.0058$]{
		{\includegraphics[width=0.45\textwidth]{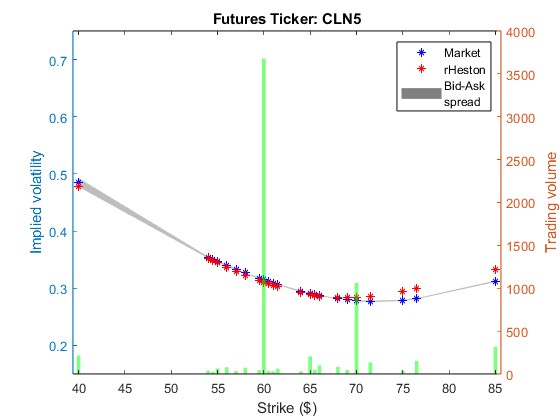} }
	}%
	
	\subfloat[Loss $= 0.0086$]{
		{\includegraphics[width=0.45\textwidth]{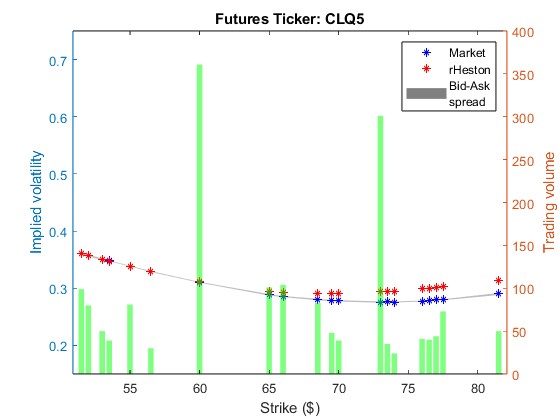} }
	}%
	\subfloat[Loss $= 0.0104$]{
		{\includegraphics[width=0.45\textwidth]{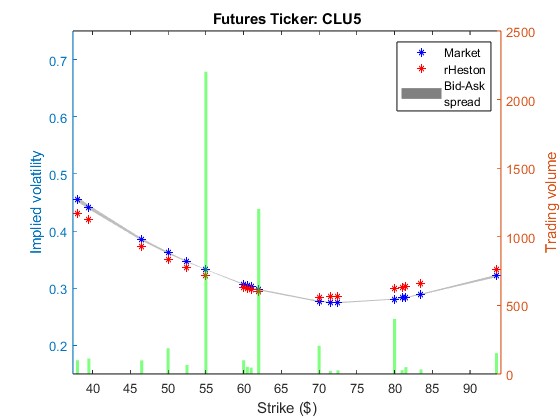} }
	}%
	
	\subfloat[Loss $= 0.0165$]{
		{\includegraphics[width=0.45\textwidth]{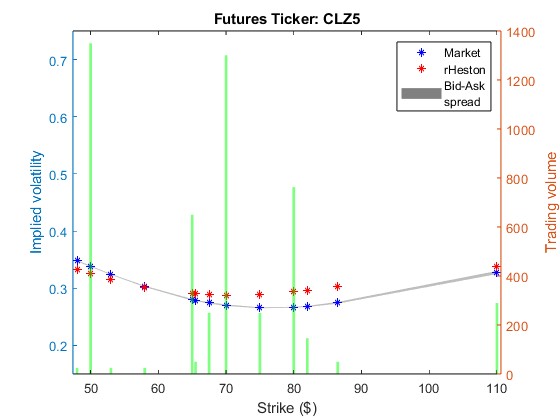} }
	}%
	\caption{Calibration results of the rHeston model (\ref{eq:model_rh}) for the data from 14th March 2025.}%
	\label{fig:smile_rh1}%
\end{figure}


\begin{figure}[!p]%
	\centering
	\subfloat[Loss $= 0.3308$]{
		{\includegraphics[width=0.45\textwidth]{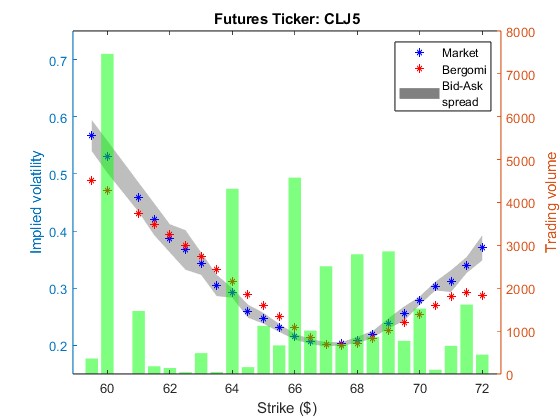} }
	}%
	\subfloat[Loss $= 0.3640$]{
		{\includegraphics[width=0.45\textwidth]{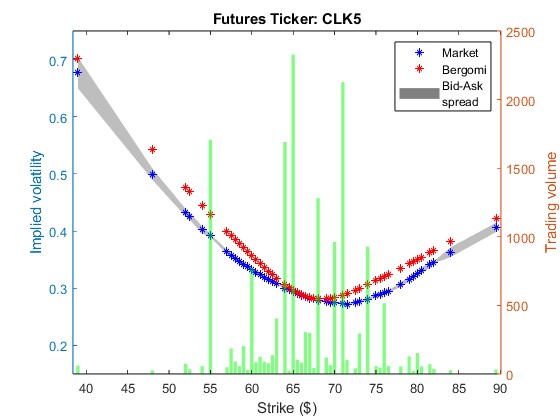} }
	}%
	
	\subfloat[Loss $= 0.0139$]{
		{\includegraphics[width=0.45\textwidth]{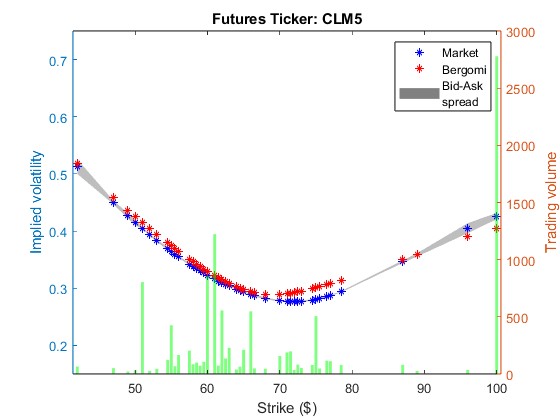} }
	}%
	\subfloat[Loss $= 0.0086$]{
		{\includegraphics[width=0.45\textwidth]{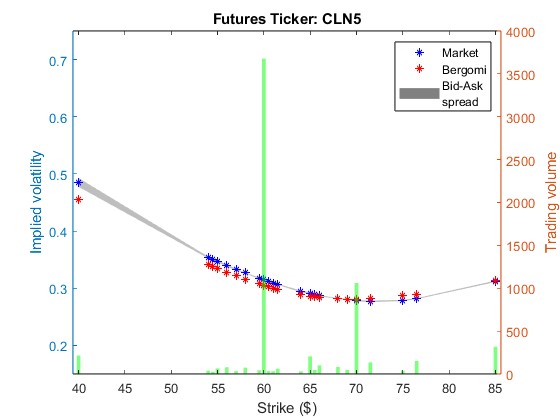} }
	}%
	
	\subfloat[Loss $= 0.0116$]{
		{\includegraphics[width=0.45\textwidth]{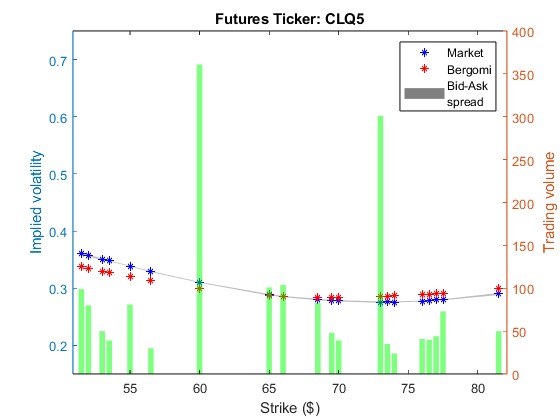} }
	}%
	\subfloat[Loss $= 0.2110$]{
		{\includegraphics[width=0.45\textwidth]{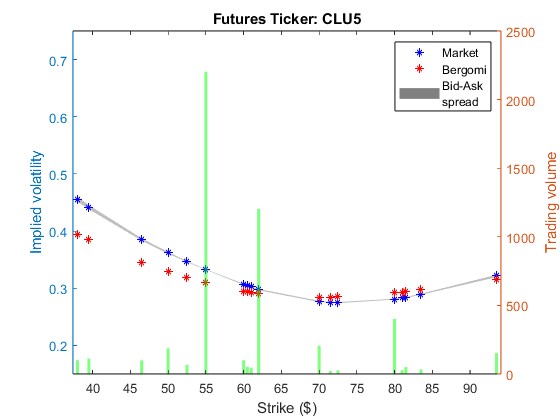} }
	}%
	
	\subfloat[Loss $= 0.0542$]{
		{\includegraphics[width=0.45\textwidth]{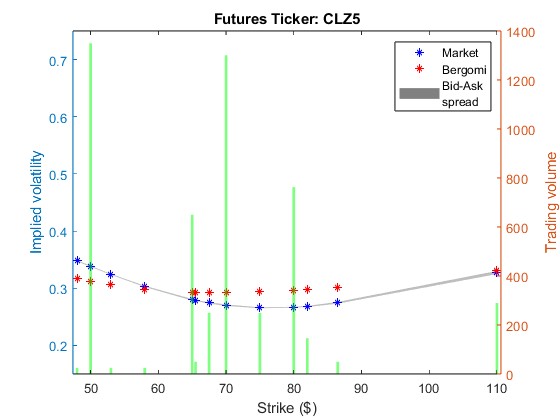} }
	}%
	\caption{Calibration results of the 1-factor Bergomi model (\ref{eq:model_b}) for the data from 14th March 2025.}%
	\label{fig:smile_b1}%
\end{figure}


\begin{figure}[!p]%
	\centering
	\subfloat[Loss $= 0.5015$]{
		{\includegraphics[width=0.45\textwidth]{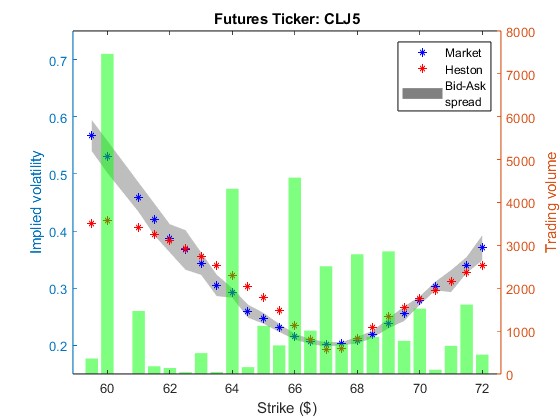} }
	}%
	\subfloat[Loss $= 0.0535$]{
		{\includegraphics[width=0.45\textwidth]{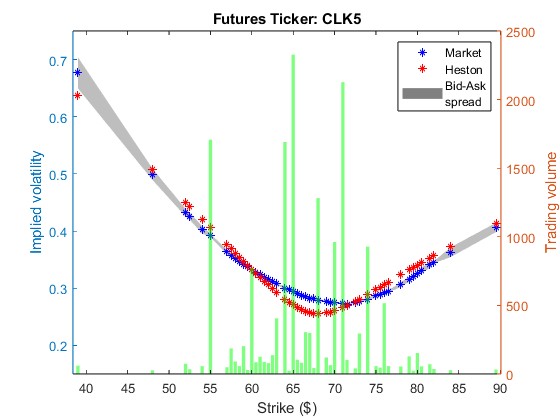} }
	}%
	
	\subfloat[Loss $= 0.0054$]{
		{\includegraphics[width=0.45\textwidth]{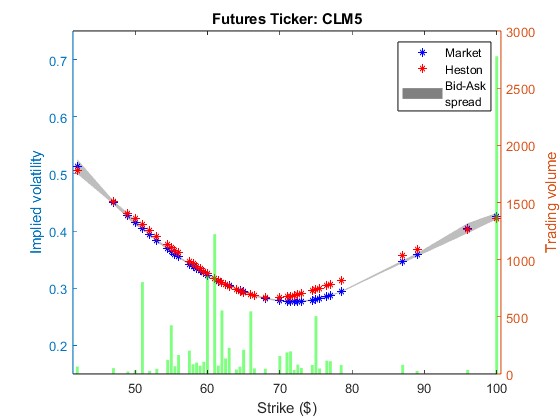} }
	}%
	\subfloat[Loss $= 0.0043$]{
		{\includegraphics[width=0.45\textwidth]{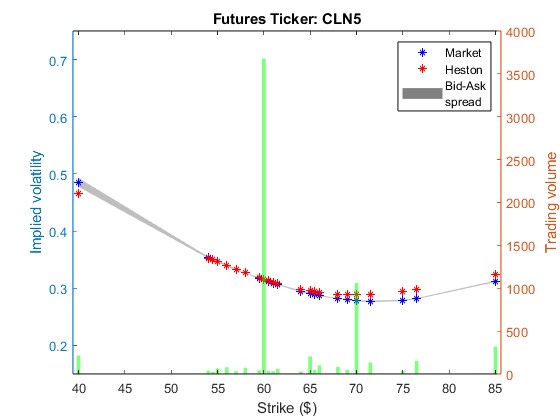} }
	}%
	
	\subfloat[Loss $= 0.0152$]{
		{\includegraphics[width=0.45\textwidth]{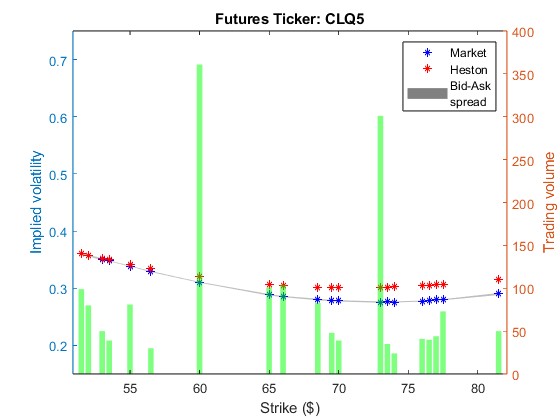} }
	}%
	\subfloat[Loss $= 0.0912$]{
		{\includegraphics[width=0.45\textwidth]{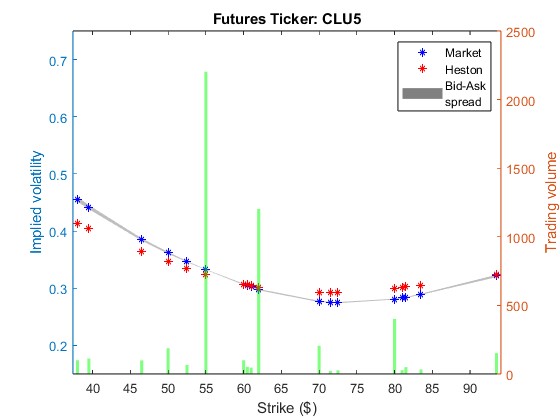} }
	}%
	
	\subfloat[Loss $= 0.0200$]{
		{\includegraphics[width=0.45\textwidth]{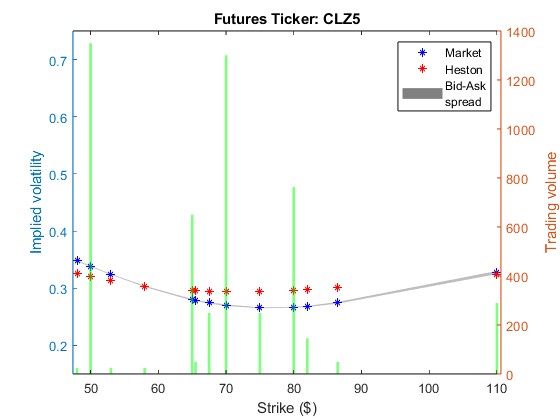} }
	}%
	\caption{Calibration results of the Heston model (\ref{eq:model_h}) for the data from 14th March 2025.}%
	\label{fig:smile_h1}%
\end{figure}

Rough models yield better quantitative results than their classical counterparts, although the qualitative differences remain negligible. In order to achieve these results, the calibrated classical models exhibit very high vol-vol $\eta$ and mean reversion speed $\kappa$, as shown in Table \ref{tab:res_param1}. We also remark on the different Hurst parameters $H$ obtained when calibrating the rBergomi and rHeston models, $0.0778$ and $0.2774$, respectively.

In order to obtain a broader perspective on these issues, we repeat the calibration using the second data set; for which the corresponding results are presented in the following section.

\begin{remark}
    The results of the classical Bergomi and Heston models seem to exhibit slight numerical instabilities in the very first smile. This could be attributed to the simulation scheme used: we decided to use the Euler-Maruyama scheme for its simplicity and ease of implementation, but these numerical instabilities could likely be avoided considering alternative schemes.
\end{remark}

\FloatBarrier

\subsection{Calibration results: 4th June 2025}
As Table \ref{tab:2nd} shows, the shortest time to maturity in the second data set is 12 days. Therefore, unlike in the previous data set, we do not need to use the dual mesh approach to simulate the rBergomi model (\ref{eq:model_rb}). However, the simulation of the rHeston model (\ref{eq:model_rh}) is significantly slower, so we still need to rely on the dual mesh approach for its simulation.

We again simulate $N=100000$ trajectories, and we use $n=800$ for the rBergomi (\ref{eq:model_rb}), Heston (\ref{eq:model_h}) and Bergomi (\ref{eq:model_b}) models. For the dual mesh for the rHeston model (\ref{eq:model_rh}), we use a fine mesh with $n=800$ and a coarse mesh with $n=300$. The optimised time-independent parameters are shown in Table \ref{tab:res_param2}, and in Figure \ref{fig:res_param2} we can observe the time-dependent parameters of each model.
\begin{table}[!htbp]
\centering
\begin{tabular}{lccccc}
\toprule
Volatility model   & $H$    & $\eta$  & $\rho$  & $\kappa$ & $\hat{v}_0$ \\
\midrule
rBergomi         & 0.0460 & 2.5600  & -0.2110 & --       & --          \\
rHeston          & 0.3214 & 2.7121  & -0.1677 & 4.5931   & --          \\
1-factor Bergomi & --     & 14.6416 & -0.1481 & 36.8716  & --          \\
Heston           & --     & 13.3771 & -0.1429 & 5.3333   & 0.1194      \\
\bottomrule
\end{tabular}
\caption{Optimised time-independent parameters of the models calibrated to the data from 4th June 2025.}\label{tab:res_param2}
\end{table}

\begin{figure}[!hp]%
	\centering
	\subfloat[rBergomi: $\xi_0(t)$]{
		{\includegraphics[width=0.45\textwidth]{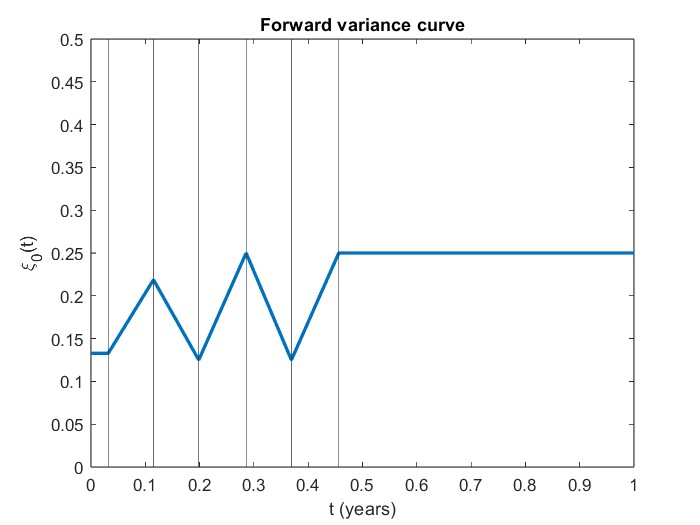} }
	}%
	\subfloat[rHeston: $\xi_0(t)$]{
		{\includegraphics[width=0.45\textwidth]{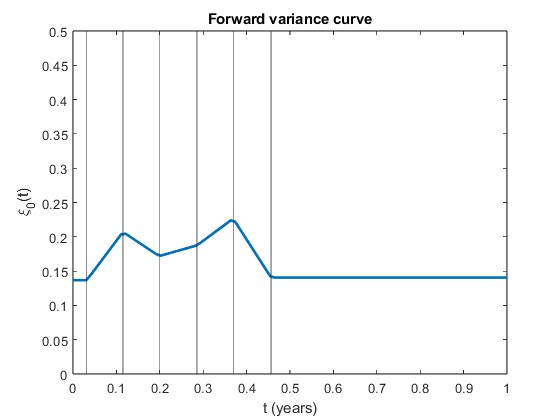} }
	}%
	
	\subfloat[Bergomi: $\xi_0(t)$]{
		{\includegraphics[width=0.45\textwidth]{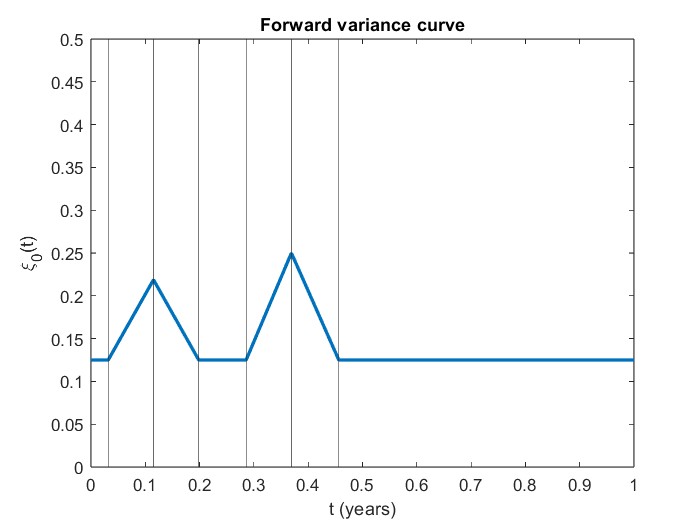} }
	}%
	\subfloat[Heston: $\bar{v}(t)$]{
		{\includegraphics[width=0.45\textwidth]{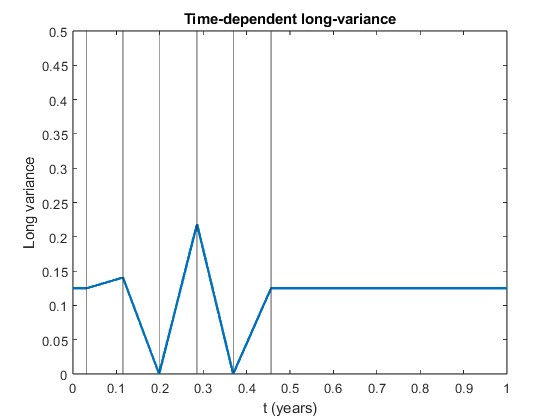} }
	}%
	
	\caption{Optimised time-dependent parameter of each model calibrated to the data from 4th June 2025. The vertical lines mark the maturity dates considered.}%
	\label{fig:res_param2}%
\end{figure}

Figures \ref{fig:smile_rb2} to \ref{fig:smile_h2} show the volatility smiles produced by the respective models. These results are consistent with those obtained in Section \ref{sec:dataset1}: the rough models obtained slightly better results than their classical counterparts, but the latter are able to reproduce the market data at the cost of very large vol-vol $\eta$ and mean reversion speed $\kappa$. Furthermore, the rBergomi model outperforms the rHeston model, particularly for the closest maturity, and again the Hurst parameters $H$ obtained from each rough model differ considerably.

We note that every model presents difficulties in capturing the right-hand wing of the smile with the closest maturity, even after introducing the penalty term in the loss function. Interestingly, the rHeston model performs the worst in this region, while the rBergomi model provides the closest approximation to the quoted implied volatility.

\begin{figure}[!hp]%
	\centering
	\subfloat[Loss $= 0.1002$]{
		{\includegraphics[width=0.45\textwidth]{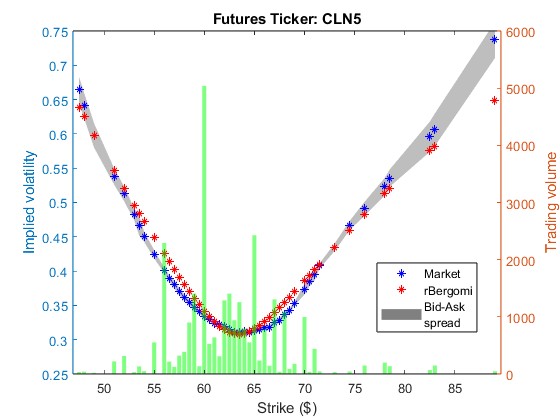} }
	}%
	\subfloat[Loss $= 0.0133$]{
		{\includegraphics[width=0.45\textwidth]{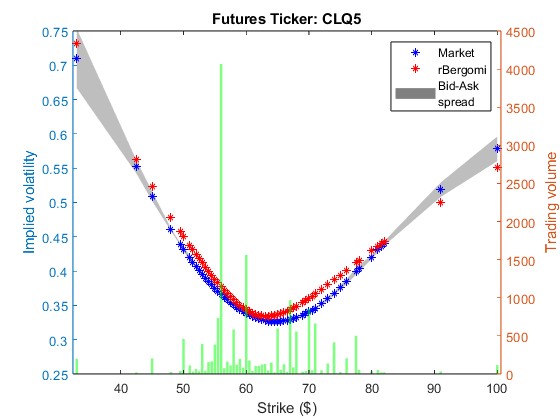} }
	}%
	
	\subfloat[Loss $= 0.0052$]{
		{\includegraphics[width=0.45\textwidth]{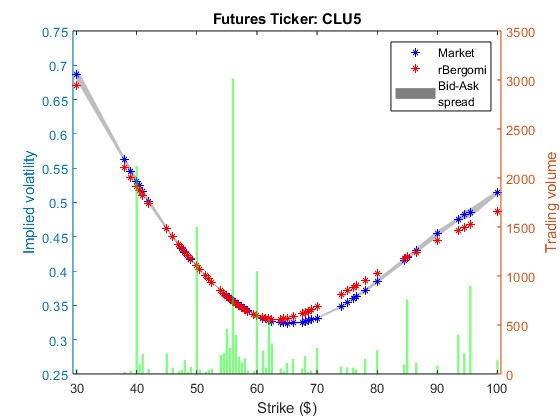} }
	}%
	\subfloat[Loss $= 0.0047$]{
		{\includegraphics[width=0.45\textwidth]{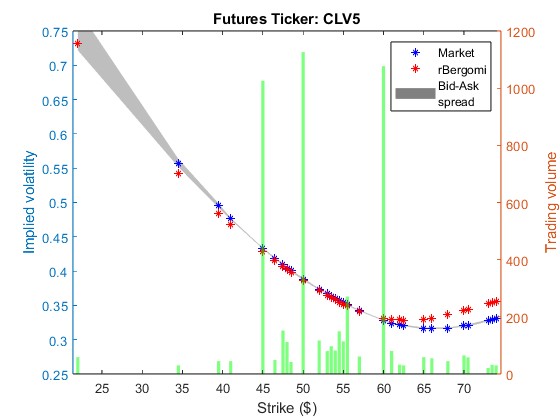} }
	}%
	
	\subfloat[Loss $= 0.0099$]{
		{\includegraphics[width=0.45\textwidth]{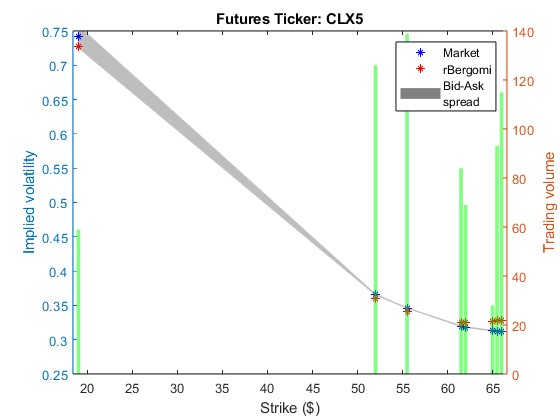} }
	}%
	\subfloat[Loss $= 0.0842$]{
		{\includegraphics[width=0.45\textwidth]{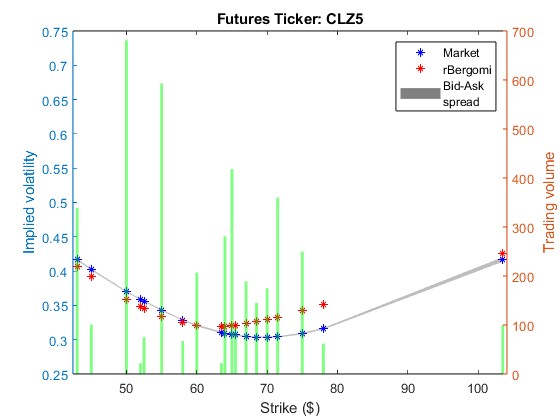} }
	}%
	\caption{Calibration results of the rBergomi model (\ref{eq:model_rb}) for the data from 4th June 2025.}%
	\label{fig:smile_rb2}%
\end{figure}


\begin{figure}[!hp]%
	\centering
	\subfloat[Loss $= 0.7082$]{
		{\includegraphics[width=0.45\textwidth]{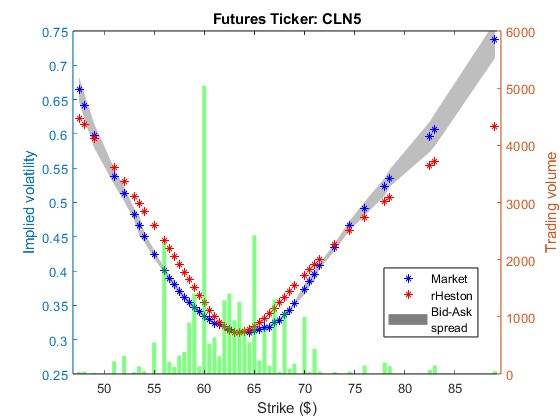} }
	}%
	\subfloat[Loss $= 0.0571$]{
		{\includegraphics[width=0.45\textwidth]{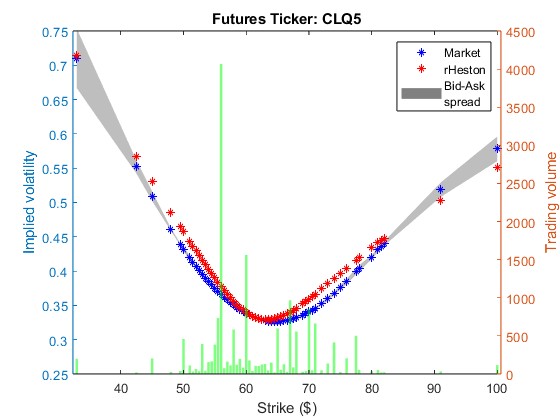} }
	}%
	
	\subfloat[Loss $= 0.0114$]{
		{\includegraphics[width=0.45\textwidth]{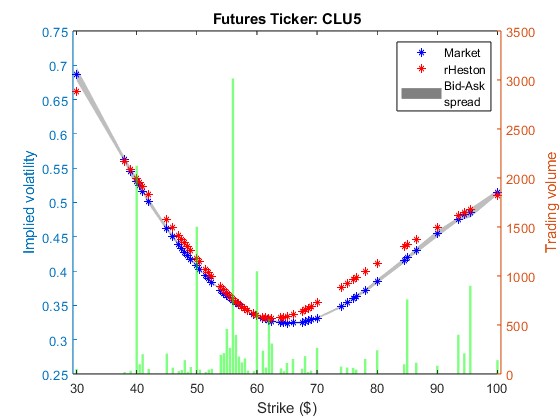} }
	}%
	\subfloat[Loss $= 0.0508$]{
		{\includegraphics[width=0.45\textwidth]{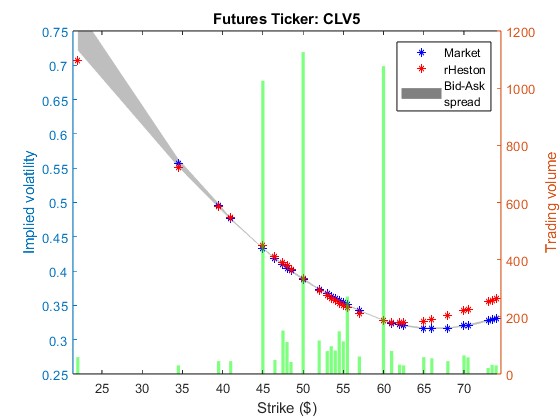} }
	}%
	
	\subfloat[Loss $= 0.0508$]{
		{\includegraphics[width=0.45\textwidth]{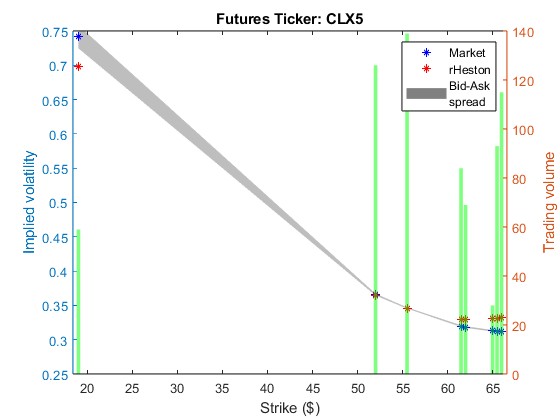} }
	}%
	\subfloat[Loss $= 0.0940$]{
		{\includegraphics[width=0.45\textwidth]{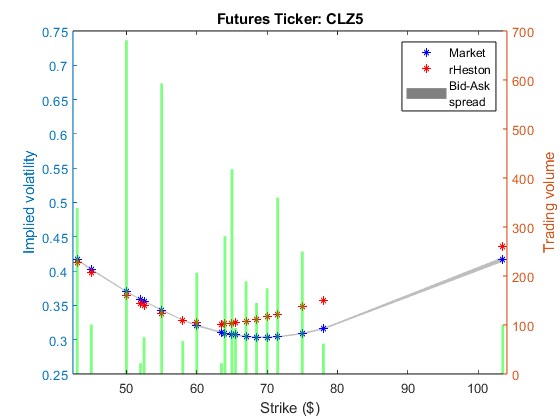} }
	}%
	\caption{Calibration results of the rHeston model (\ref{eq:model_rh}) for the data from 4th June 2025.}%
	\label{fig:smile_rh2}%
\end{figure}


\begin{figure}[!hp]%
	\centering
	\subfloat[Loss $= 0.2445$]{
		{\includegraphics[width=0.45\textwidth]{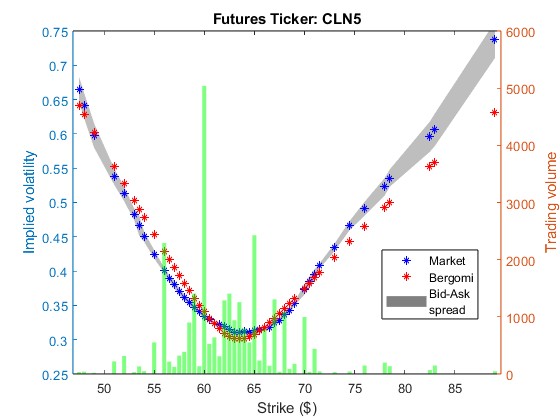} }
	}%
	\subfloat[Loss $= 0.0489$]{
		{\includegraphics[width=0.45\textwidth]{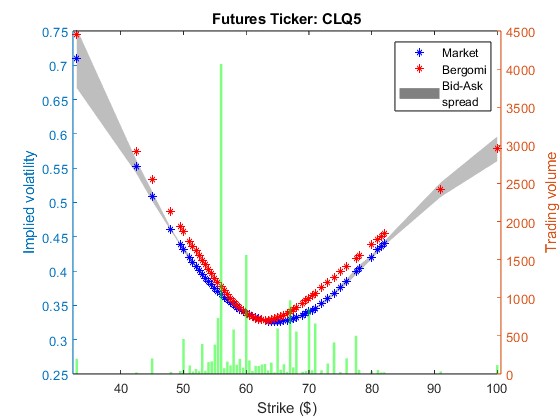} }
	}%
	
	\subfloat[Loss $= 0.0396$]{
		{\includegraphics[width=0.45\textwidth]{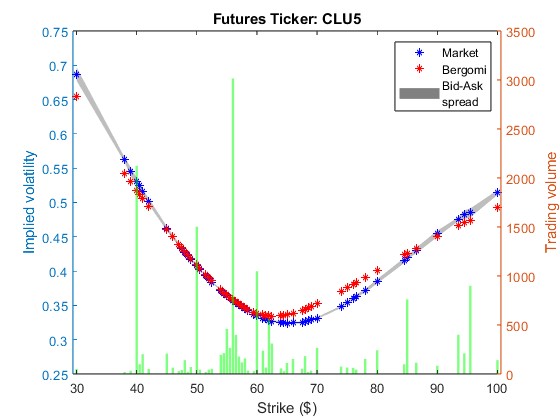} }
	}%
	\subfloat[Loss $= 0.2361$]{
		{\includegraphics[width=0.45\textwidth]{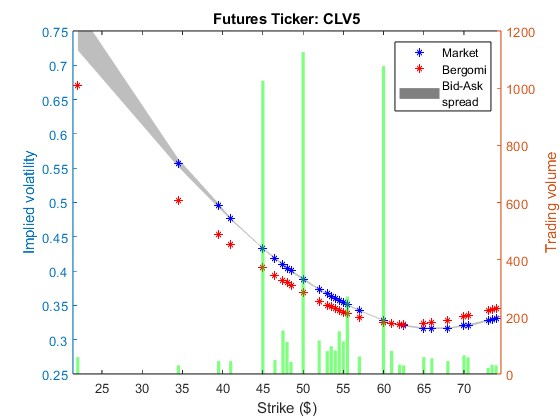} }
	}%
	
	\subfloat[Loss $= 0.1181$]{
		{\includegraphics[width=0.45\textwidth]{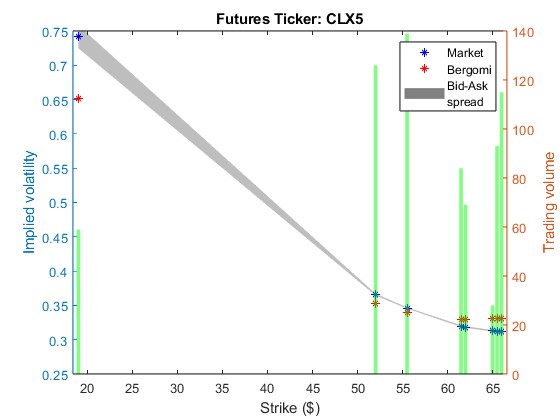} }
	}%
	\subfloat[Loss $= 0.1924$]{
		{\includegraphics[width=0.45\textwidth]{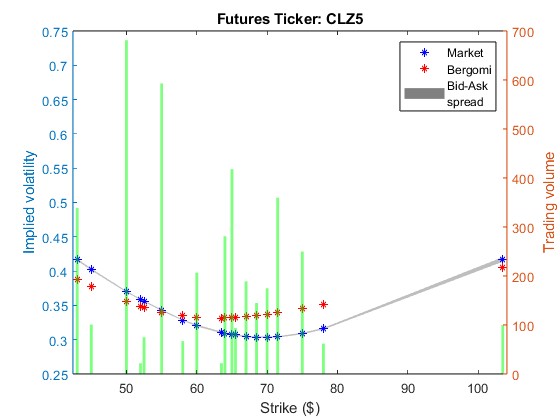} }
	}%
	\caption{Calibration results of the Bergomi model (\ref{eq:model_b}) for the data from 4th June 2025.}%
	\label{fig:smile_b2}%
\end{figure}


\begin{figure}[!hp]%
	\centering
	\subfloat[Loss $= 0.5915$]{
		{\includegraphics[width=0.45\textwidth]{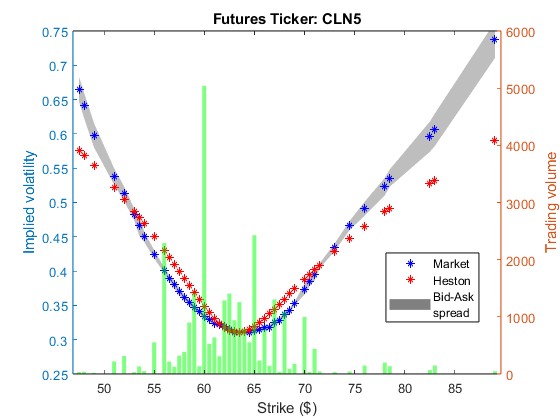} }
	}%
	\subfloat[Loss $= 0.0148$]{
		{\includegraphics[width=0.45\textwidth]{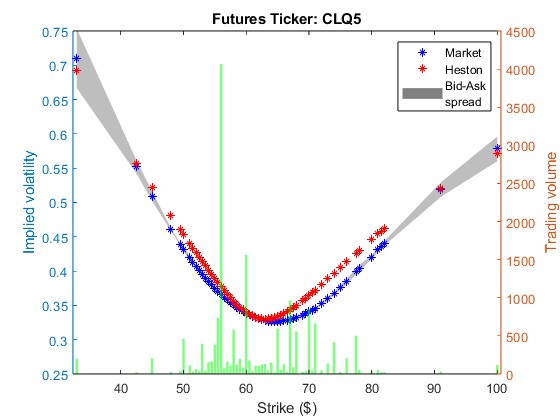} }
	}%
	
	\subfloat[Loss $= 0.0470$]{
		{\includegraphics[width=0.45\textwidth]{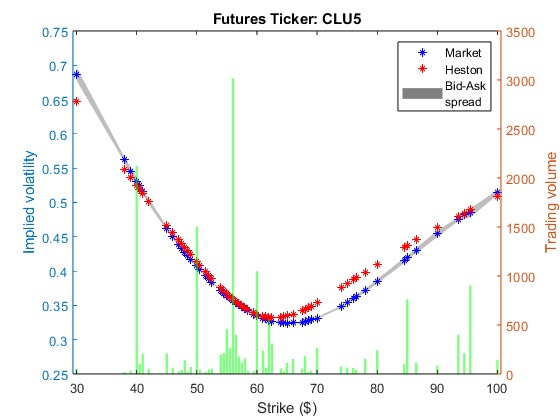} }
	}%
	\subfloat[Loss $= 0.0788$]{
		{\includegraphics[width=0.45\textwidth]{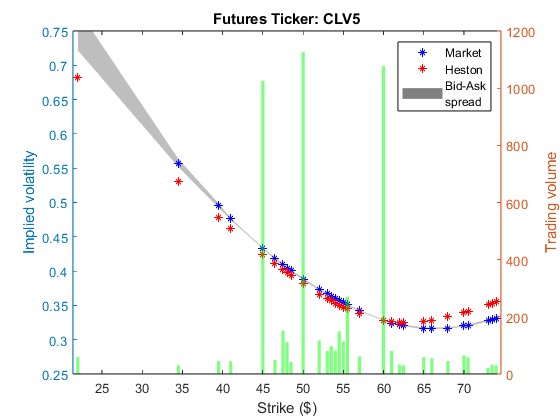} }
	}%
	
	\subfloat[Loss $= 0.1115$]{
		{\includegraphics[width=0.45\textwidth]{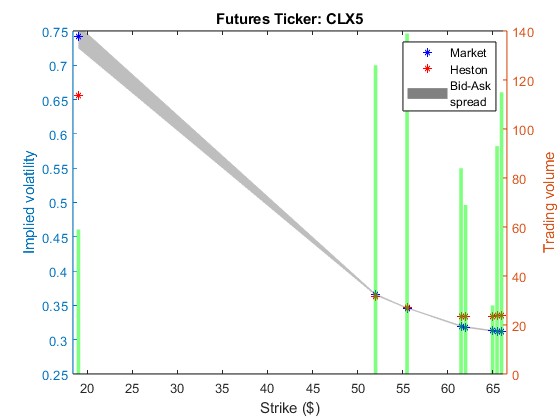} }
	}%
	\subfloat[Loss $= 0.0853$]{
		{\includegraphics[width=0.45\textwidth]{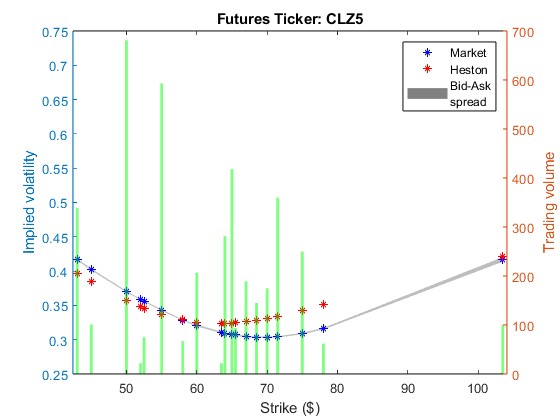} }
	}%
	\caption{Calibration results of the Heston model (\ref{eq:model_h}) for the data from 4th June 2025.}%
	\label{fig:smile_h2}%
\end{figure}

\FloatBarrier

\subsection{Time-dependent correlation}
In order to improve the previous results with minimal changes to the model, we allow the correlation parameter to become time-dependent. This is the only parameter of our model that affects just the price process and not the volatility, apart from the mean reversion speed, which we do not calibrate. Consequently, this extension requires minimal changes to the implementation. Specifically, we consider a different correlation for each maturity, yielding a piecewise-constant structure.

This extension is also reasonable from a theoretical perspective. Indeed, allowing the correlation to be time-dependent is equivalent to considering a multi-factor model, in which there is one spot price process for each single maturity. The main drawback of this approach is that these price processes can no longer be interpreted as real prices. However, this is not problematic in our framework, since we are already working with the fictitious spot price process.

\begin{table}[!ht]
\centering
\begin{tabular}{clccc}
\toprule
Data-set                         & Volatility model & $H$    & $\eta$ & $\kappa$ \\
\midrule
\multirow{2}{*}{14th March 2025} & rBergomi         & 0.0603 & 2.189  & --       \\
                                 & rHeston          & 0.315  & 2.7643 & 8.3101   \\
\multirow{2}{*}{4th June 2025}   & rBergomi         & 0.0453 & 2.5596 & --       \\
                                 & rHeston          & 0.2759 & 2.4718 & 4.5016   \\
\bottomrule
\end{tabular}
\caption{Optimised time-independent parameters of the extended rough models with time-dependent correlation.}\label{tab:corrtimedep}
\end{table}

We use the same simulation parameters $n$ and $N$ as in the previous sections. The calibrated model parameters are shown in Table \ref{tab:corrtimedep}. For clarity, the calibrated correlation is not included in the table, and it is instead displayed in Figures \ref{fig:res_corr1} and \ref{fig:res_corr2}, jointly with the calibrated initial forward variance curve. The obtained parameters are consistent with those reported in Tables \ref{tab:res_param1} and \ref{tab:res_param2}, in the case of the rBergomi model. Furthermore, we observe that the correlation decreases for longer maturities, with the previously calibrated constant correlation lying within the range of the time-dependent curve.

\begin{figure}[ht!]%
	\centering
	\subfloat[rBergomi: $\rho(t)$]{
		{\includegraphics[width=0.35\textwidth]{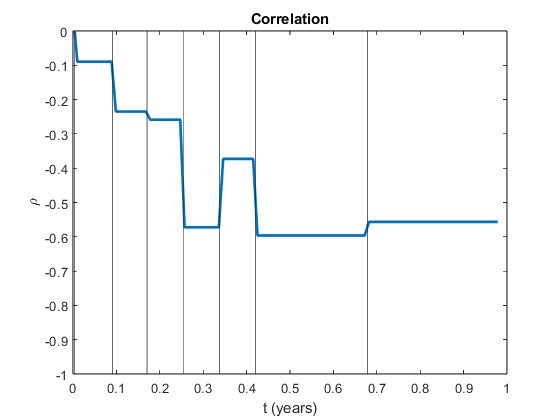} }
	}%
    \subfloat[rBergomi: $\xi_0(t)$]{
		{\includegraphics[width=0.35\textwidth]{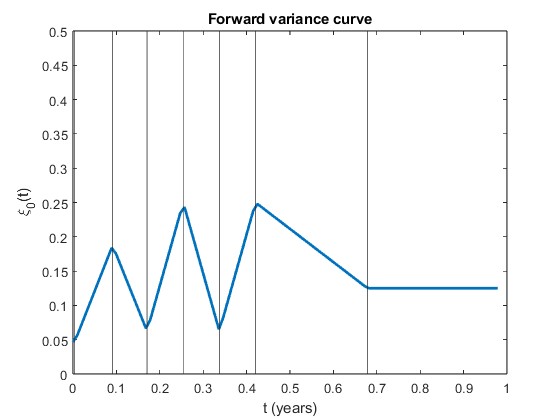} }
	}%

    \subfloat[rHeston: $\rho(t)$]{
		{\includegraphics[width=0.35\textwidth]{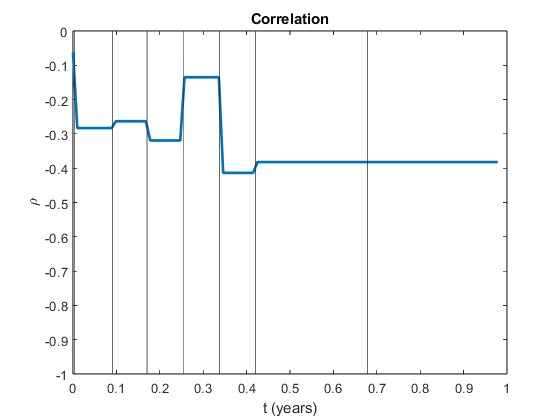} }
	}%
	\subfloat[rHeston: $\xi_0(t)$]{
		{\includegraphics[width=0.35\textwidth]{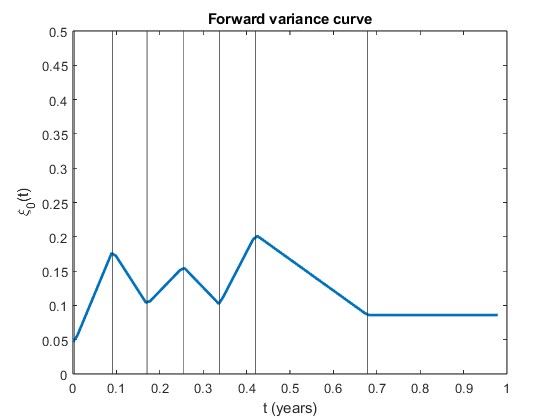} }
	}%
	
	\caption{Time-dependent correlation and initial forward variance curve calibrated to the data from 14th March 2025. The vertical lines mark the maturity dates considered.}%
	\label{fig:res_corr1}%
\end{figure}

\begin{figure}[ht!]%
	\centering
	\subfloat[rBergomi: $\rho(t)$]{
		{\includegraphics[width=0.35\textwidth]{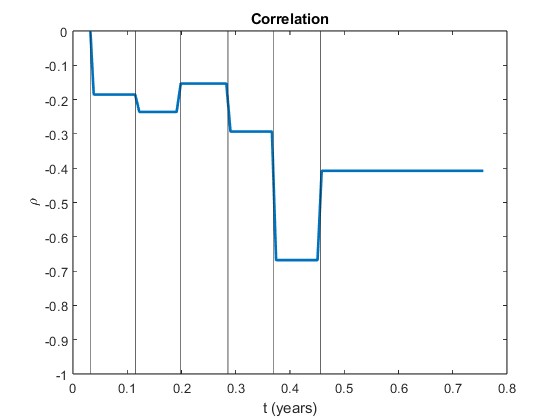} }
	}%
    \subfloat[rBergomi: $\xi_0(t)$]{
		{\includegraphics[width=0.35\textwidth]{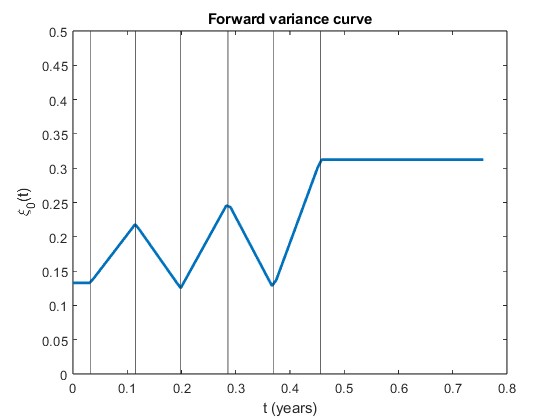} }
	}%

    \subfloat[rHeston: $\rho(t)$]{
		{\includegraphics[width=0.35\textwidth]{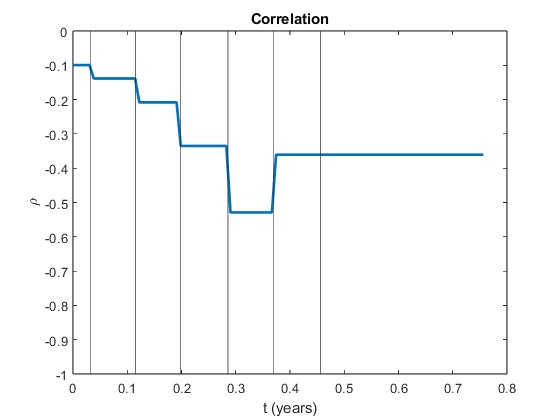} }
	}%
	\subfloat[rHeston: $\xi_0(t)$]{
		{\includegraphics[width=0.35\textwidth]{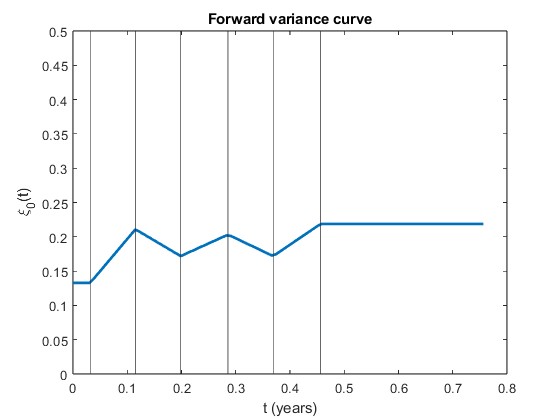} }
	}%
	
	\caption{Time-dependent correlation and initial forward variance curve calibrated to the data from 4th June 2025. The vertical lines mark the maturity dates considered.}%
	\label{fig:res_corr2}%
\end{figure}

The resulting smiles can be found in Figures \ref{fig:smile_rb1_corrtimedep} to \ref{fig:smile_rh2_corrtimedep}. We observe that allowing the correlation to depend on the maturity date improved the fitting of the volatility surface, especially in the case of the rHeston model. Indeed, on both data sets, there is a significant improvement for the shorter and longer maturity dates. Very similar improvements, although less pronounced, are found in the rBergomi model, particularly in the right hand side of smiles with longer maturity dates.

\begin{figure}[ht!]%
	\centering
	\subfloat[Loss $= 0.0098$]{
		{\includegraphics[width=0.45\textwidth]{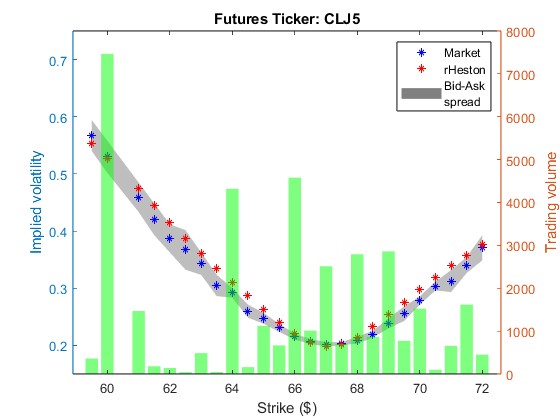} }
	}%
	\subfloat[Loss $= 0.0397$]{
		{\includegraphics[width=0.45\textwidth]{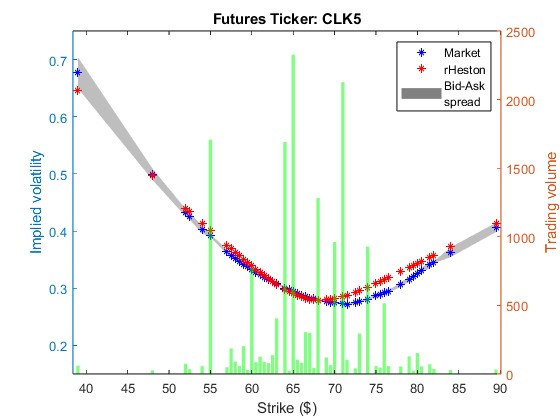} }
	}%
	
	\subfloat[Loss $= 0.0078$]{
		{\includegraphics[width=0.45\textwidth]{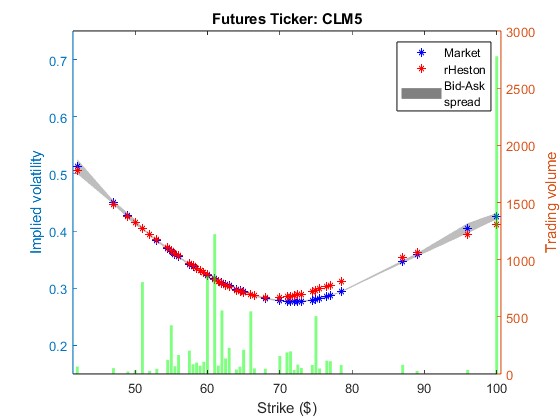} }
	}%
	\subfloat[Loss $= 0.0118$]{
		{\includegraphics[width=0.45\textwidth]{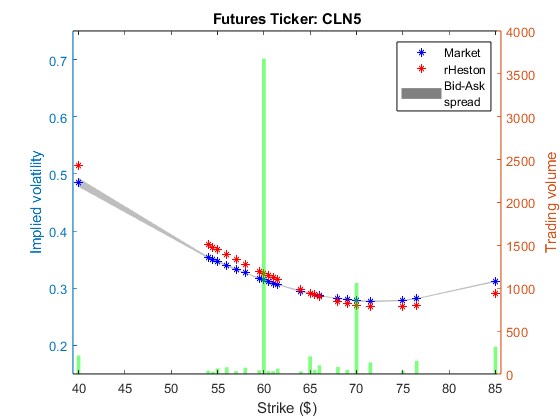} }
	}%
	
	\subfloat[Loss $= 0.0085$]{
		{\includegraphics[width=0.45\textwidth]{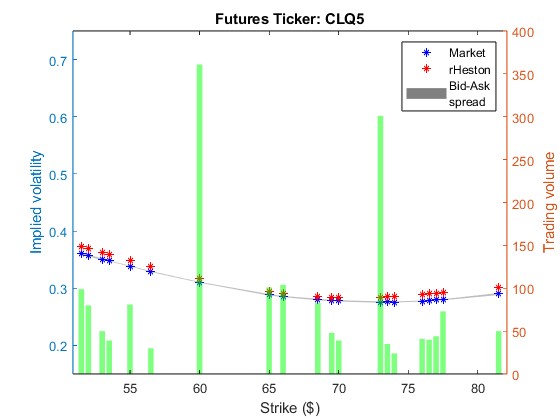} }
	}%
	\subfloat[Loss $= 0.0102$]{
		{\includegraphics[width=0.45\textwidth]{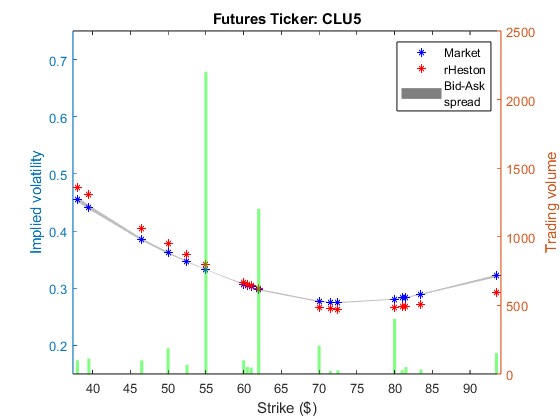} }
	}%

    \subfloat[Loss $= 0.0054$]{
		{\includegraphics[width=0.45\textwidth]{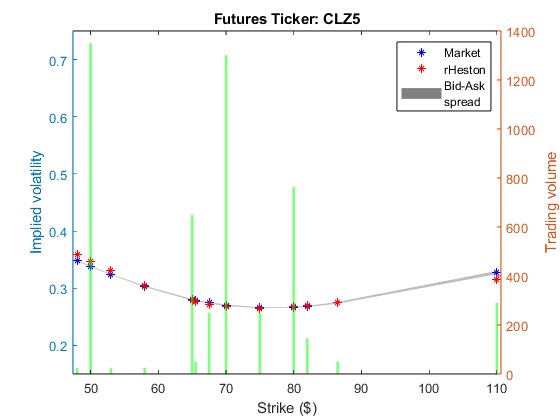} }
	}%
	\caption{Calibration results of the rBergomi model (\ref{eq:model_rb}) with time-dependent correlation for the data from 14th March 2025.}%
	\label{fig:smile_rb1_corrtimedep}%
\end{figure}


\begin{figure}[ht!]%
	\centering
	\subfloat[Loss $= 0.1741$]{
		{\includegraphics[width=0.45\textwidth]{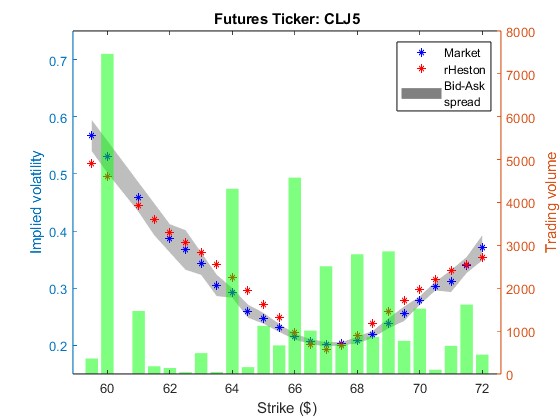} }
	}%
	\subfloat[Loss $= 0.0103$]{
		{\includegraphics[width=0.45\textwidth]{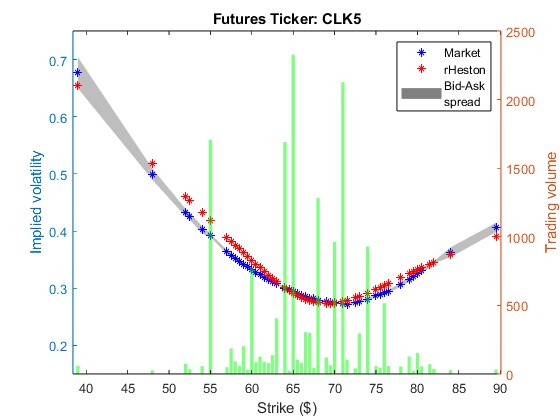} }
	}%
	
	\subfloat[Loss $= 0.0130$]{
		{\includegraphics[width=0.45\textwidth]{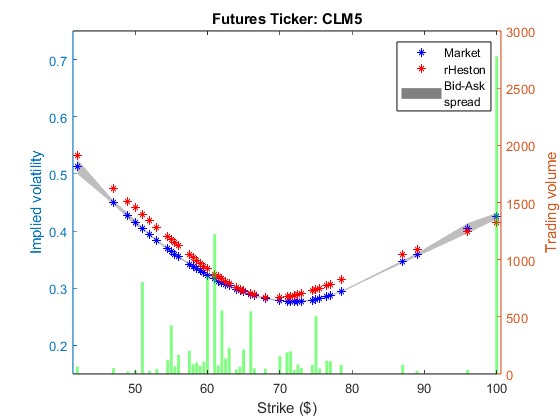} }
	}%
	\subfloat[Loss $= 0.0039$]{
		{\includegraphics[width=0.45\textwidth]{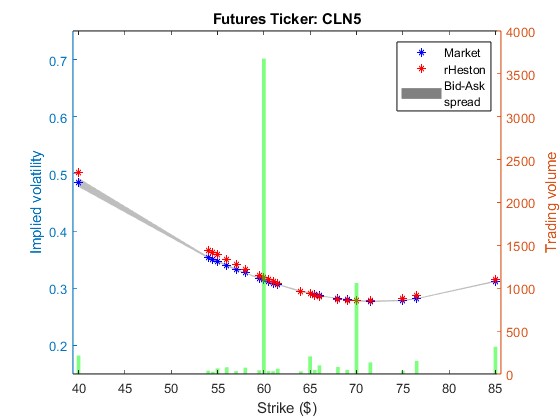} }
	}%
	
	\subfloat[Loss $= 0.0120$]{
		{\includegraphics[width=0.45\textwidth]{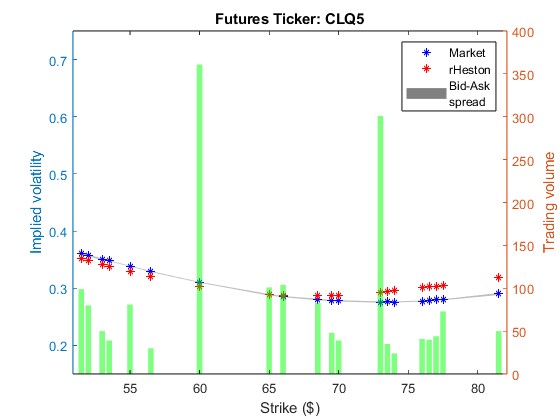} }
	}%
	\subfloat[Loss $= 0.0038$]{
		{\includegraphics[width=0.45\textwidth]{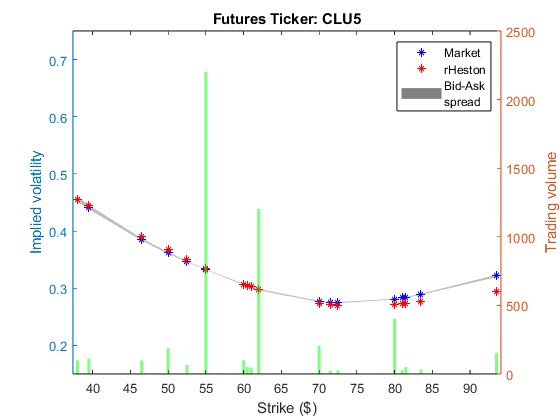} }
	}%

    \subfloat[Loss $= 0.0033$]{
		{\includegraphics[width=0.45\textwidth]{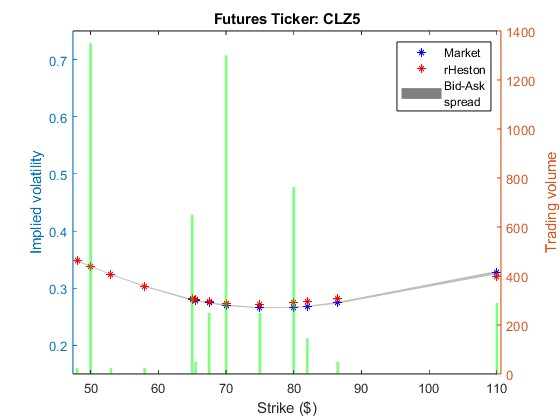} }
	}%
	\caption{Calibration results of the rHeston model (\ref{eq:model_rh}) with time-dependent correlation for the data from 14th March 2025.}%
	\label{fig:smile_rh1_corrtimedep}%
\end{figure}


\begin{figure}[ht!]%
	\centering
	\subfloat[Loss $= 0.0961$]{
		{\includegraphics[width=0.45\textwidth]{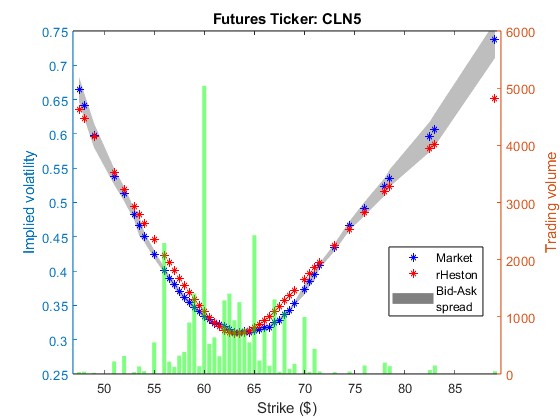} }
	}%
	\subfloat[Loss $= 0.0442$]{
		{\includegraphics[width=0.45\textwidth]{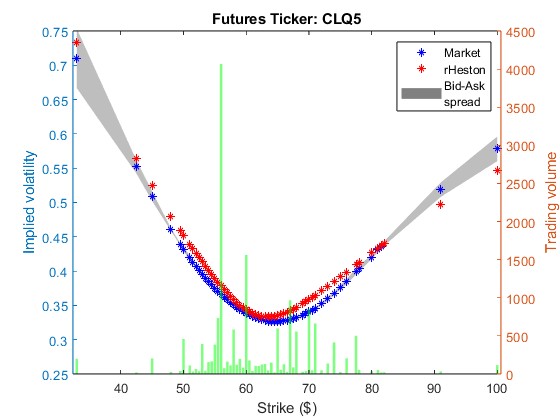} }
	}%
	
	\subfloat[Loss $= 0.0073$]{
		{\includegraphics[width=0.45\textwidth]{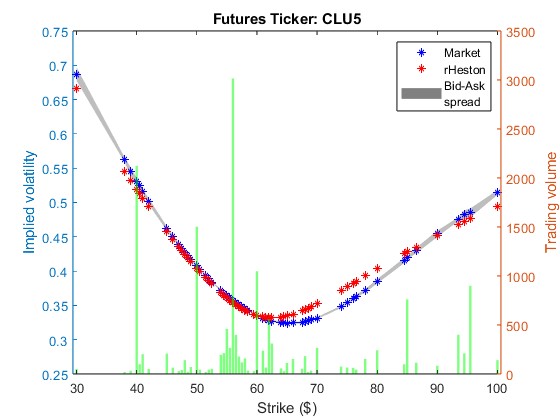} }
	}%
	\subfloat[Loss $= 0.0025$]{
		{\includegraphics[width=0.45\textwidth]{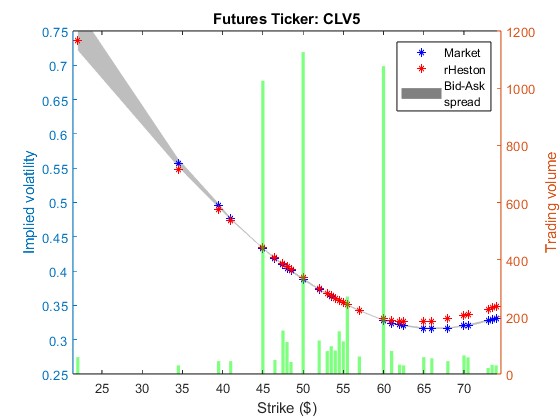} }
	}%
	
	\subfloat[Loss $= 0.0110$]{
		{\includegraphics[width=0.45\textwidth]{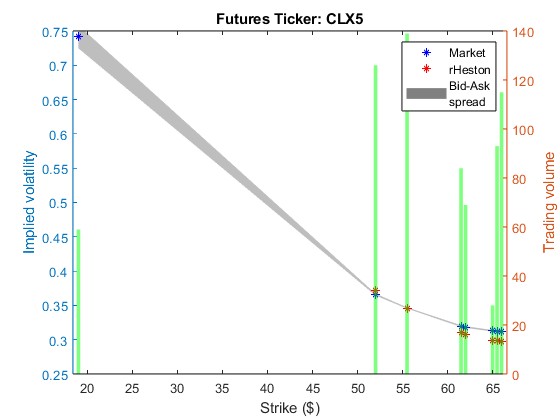} }
	}%
	\subfloat[Loss $= 0.0072$]{
		{\includegraphics[width=0.45\textwidth]{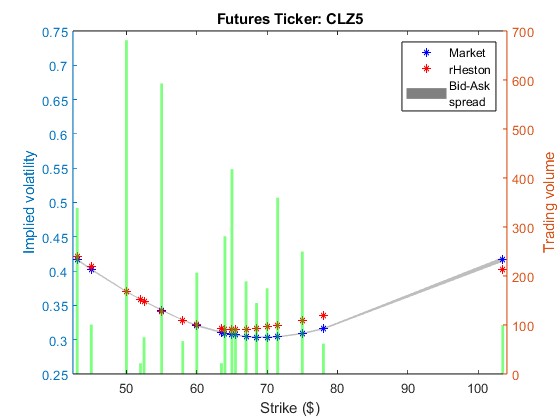} }
	}%
	\caption{Calibration results of the rBergomi model (\ref{eq:model_rb}) with time-dependent correlation for the data from 4th June 2025.}%
	\label{fig:smile_rb2_corrtimedep}%
\end{figure}


\begin{figure}[ht!]%
	\centering
		\subfloat[Loss $= 0.3198$]{
		{\includegraphics[width=0.45\textwidth]{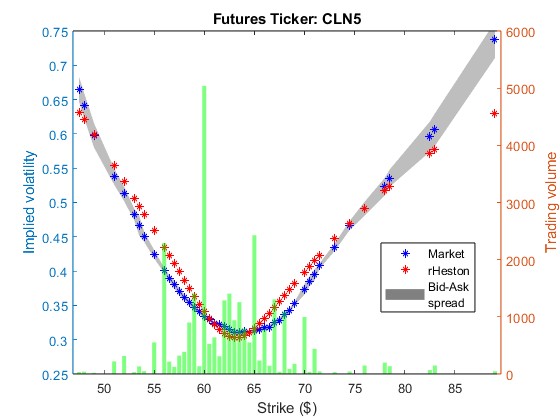} }
	}%
	\subfloat[Loss $= 0.0138$]{
		{\includegraphics[width=0.45\textwidth]{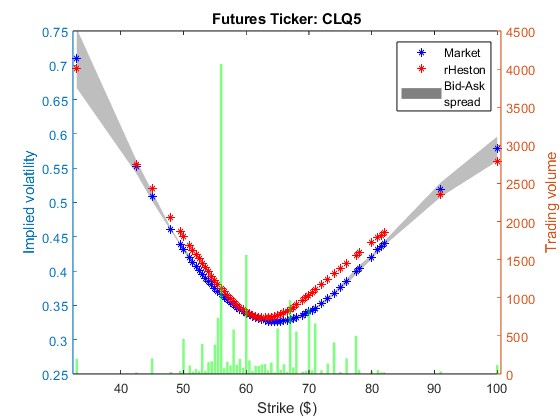} }
	}%
	
	\subfloat[Loss $= 0.0082$]{
		{\includegraphics[width=0.45\textwidth]{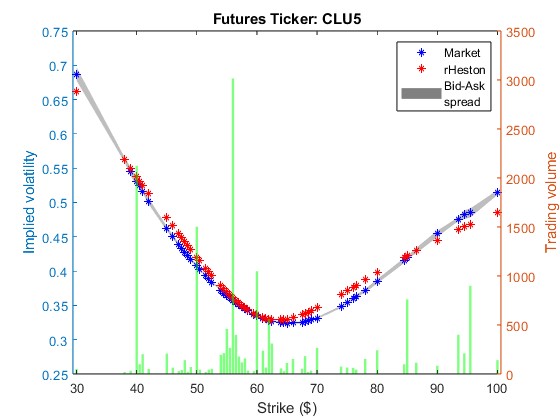} }
	}%
	\subfloat[Loss $= 0.0132$]{
		{\includegraphics[width=0.45\textwidth]{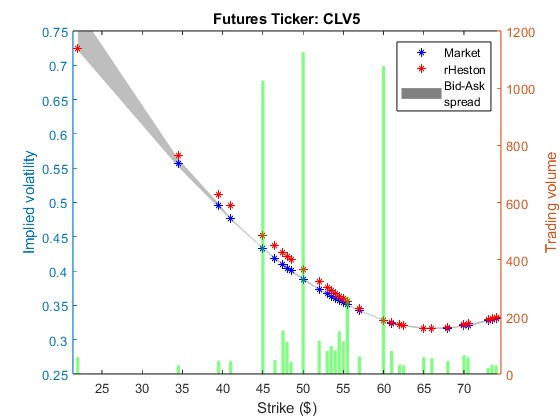} }
	}%
	
	\subfloat[Loss $= 0.0114$]{
		{\includegraphics[width=0.45\textwidth]{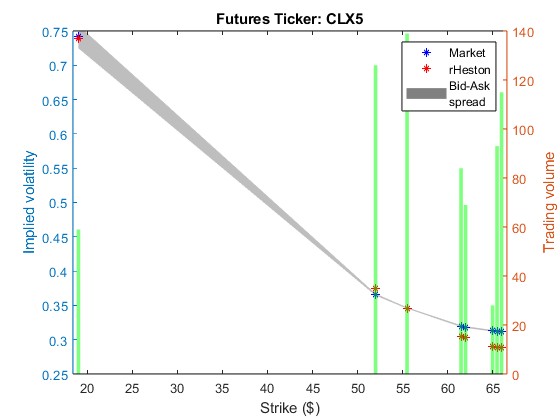} }
	}%
	\subfloat[Loss $= 0.0043$]{
		{\includegraphics[width=0.45\textwidth]{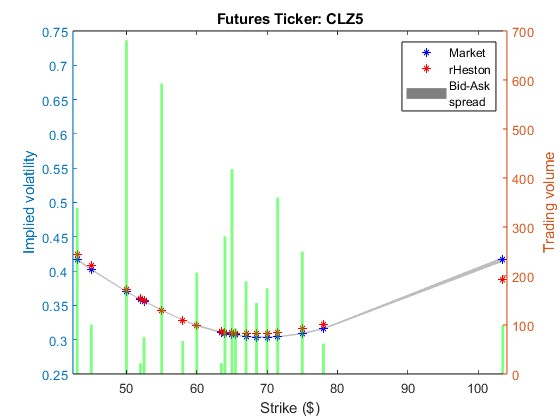} }
	}%
	\caption{Calibration results of the rHeston model (\ref{eq:model_rh}) with time-dependent correlation for the data from 4th June 2025.}%
	\label{fig:smile_rh2_corrtimedep}%
\end{figure}

\FloatBarrier

\section{Conclusions}\label{sec:conclusion}
Prior to this work, little to no information could be found in the literature on the applicability of rough volatility models in commodity markets. Our contribution consists of a novel model to price futures options on commodities, in which the volatility process is given by a general forward volatility model, and in particular by a general rough volatility model satisfying the condition that the process $Y_t$, defined in (\ref{eq:y_t}), is a martingale. We provided a theoretical justification for this model and calibrated it to market data on WTI Crude Oil.

For the considered data sets, we found that the rBergomi and rHeston models yielded good results, but the classical Bergomi and Heston models were able to nearly reproduce their results at the cost of very large vol-vol $\eta$ and mean reversion speed $\kappa$. Further numerical tests with different data sets are necessary to explain this phenomenon. Furthermore, we found that allowing the correlation to depend on maturity improves the fit of the volatility surface, especially for the rHeston model.

Several directions for future research remain open. In addition to investigating the qualitative properties of the Bergomi and Heston models with very large $\eta$ and $\kappa$, it would also be interesting to apply our model and methodology to other commodities besides WTI Crude Oil. The results in \citet{alfeuspaper} show that many commodities exhibit a Hurst parameter $H<0.2$; thus, in principle, model (\ref{eq:model}) should yield good results in a wide variety of commodity markets.

\section*{Acknowledgements}

H. Folgar-Came\'an and C. V\'azquez acknowledge the funding from Ministry of Science and Innovation of Spain through the grant PID2022-141058OB-I00, as well as from the Galician Government through the grant ED431C 2022/047 (both including FEDER financial support). Both authors also acknowledge the support of CITIC, as a center accredited for excellence within the Galician University System and a
member of the CIGUS Network, that receives subsidies from the Department of Education, Science, Universities, and Vocational Training of the Xunta de Galicia. Additionally, CITIC is co-financed by the EU through the FEDER Galicia 2021-27 operational program (Ref. ED431G 2023/01).

\section*{Disclaimer}

The authors report no potential competing interests. The opinions expressed in this document are solely those of the authors and do not represent in any way those of their present and past employers.

\bibliographystyle{plainnat}
\bibliography{arxiv_refs}

@ARTICLE{alfeuspaper,
  author  = {Alfeus, M. and Nikitopoulos, C.},
  title   = {Forecasting volatility in commodity markets with long-memory models},
  journal = {Journal of Commodity Markets},
  volume  = {28},
  year    = {2022},
  pages   = {100248},
  doi     = {https://doi.org/10.1016/j.jcomm.2022.100248}
}

@ARTICLE{pakkanenpaper,
  author  = {Bennedsen, M. and Lunde, A. and Pakkanen, M. S.},
  title   = {Hybrid scheme for {B}rownian semistationary processes},
  journal = {Finance and Stochastics},
  volume  = {21},
  year    = {2017},
  pages   = {931--965},
  doi     = {https://doi.org/10.1007/s00780-017-0335-5}
}

@ARTICLE{baschettipaper,
  author  = {Baschetti, F. and Bormetti, G. and Rossi, P.},
  title   = {Deep calibration with random grids},
  journal = {Quantitative Finance},
  volume  = {24},
  number  = {9},
  year    = {2024},
  pages   = {1263--1285},
  doi     = {https://doi.org/10.1080/14697688.2024.2332375}
}

@BOOK{bergomibook,
  author    = {Bergomi, L.},
  title     = {Stochastic Volatility Modeling},
  edition   = {1},
  publisher = {CRC Press},
  year      = {2015}
}

@ARTICLE{bessembinderpaper,
  author  = {Bessembinder, H. and Coughenour, J. F. and Seguin, P. J. and Smoller, M. M.},
  title   = {Is there a term structure of futures volatilities? {R}eevaluating the {S}amuelson hypothesis},
  journal = {Journal of Derivatives},
  volume  = {4},
  number  = {2},
  year    = {1995},
  pages   = {45--58},
  doi     = {https://doi.org/10.3905/jod.1996.407967}
}

@ARTICLE{blackscholespaper,
  author  = {Black, F. and Scholes, M.},
  title   = {The pricing of options and corporate liabilities},
  journal = {Journal of Political Economy},
  volume  = {81},
  number  = {3},
  year    = {1973},
  pages   = {637--654},
  doi     = {https://doi.org/10.1086/260062}
}

@UNPUBLISHED{bourgeypaper,
  author = {Bourgey, F. and Delemotte, J. and De Marco, S.},
  title  = {Smile dynamics and rough volatility},
  note   = {Available at SSRN: \url{https://ssrn.com/abstract=4911186}},
  year   = {2024}
}

@ARTICLE{buehlerpaper,
  author  = {Buehler, H.},
  title   = {Consistent variance curve models},
  journal = {Finance and Stochastics},
  volume  = {10},
  number  = {2},
  year    = {2006},
  pages   = {178--203},
  doi     = {https://doi.org/10.1007/s00780-006-0008-2}
}

@ARTICLE{drimuspaper,
  author  = {Drimus, G. and Farkas, W.},
  title   = {Local volatility of volatility for the {VIX} market},
  journal = {Review of Derivatives Research},
  volume  = {16},
  number  = {3},
  year    = {2013},
  pages   = {267--293},
  doi     = {https://doi.org/10.1007/s11147-012-9086-9}
}

@ARTICLE{eleuchpaper,
  author  = {El Euch, B. O. and Rosenbaum, M.},
  title   = {Perfect hedging in rough {H}eston models},
  journal = {The Annals of Applied Probability},
  volume  = {28},
  number  = {6},
  year    = {2018},
  pages   = {3813--3856},
  doi     = {https://doi.org/10.1214/18-AAP1408}
}

@ARTICLE{fukasawapaper,
  author  = {Fukasawa, M.},
  title   = {Short-time at-the-money skew and rough fractional volatility},
  journal = {Quantitative Finance},
  volume  = {17},
  number  = {2},
  year    = {2017},
  pages   = {189--198},
  doi     = {https://doi.org/10.1080/14697688.2016.1197410}
}

@ARTICLE{gassiatpaper,
  author  = {Gassiat, P.},
  title   = {On the martingale property in the rough {B}ergomi model},
  journal = {Electronic Communications in Probability},
  volume  = {24},
  year    = {2019},
  pages   = {1--9},
  doi     = {https://doi.org/10.1214/19-ECP239}
}

@ARTICLE{hqepaper,
  author  = {Gatheral, J.},
  title   = {Efficient simulation of affine forward variance models},
  journal = {Risk Magazine},
  year    = {2022},
  note    = {February 2022, available in \url{https://www.risk.net/cutting-edge/banking/7923331/efficient-simulation-of-affine-forward-variance-models}}
}

@ARTICLE{gatheralpaper,
  author  = {Gatheral, J. and Jaisson, T. and Rosenbaum, M.},
  title   = {Volatility is rough},
  journal = {Quantitative Finance},
  volume  = {18},
  number  = {6},
  year    = {2014},
  pages   = {933--949},
  doi     = {https://doi.org/10.1080/14697688.2017.1393551}
}

@ARTICLE{hopaper,
  author  = {Ho, C. C. and Lee, P. H. and Tsai, P. S.},
  title   = {Competing hypotheses on the {S}amuelson effect in futures markets},
  journal = {Applied Economics},
  volume  = {55},
  number  = {20},
  year    = {2023},
  pages   = {2261--2272},
  doi     = {https://doi.org/10.1080/00036846.2022.2102569}
}

@BOOK{ikedabook,
  author    = {Ikeda, N. and Watanabe, S.},
  title     = {Stochastic Differential Equations and Diffusion Processes},
  edition   = {2},
  series    = {North-Holland Mathematical Library},
  publisher = {Elsevier Science Publishers B.V.},
  address   = {Amsterdam},
  year      = {1989}
}

@ARTICLE{kolmogorovpaper,
  author  = {Kolmogorov, A.},
  title   = {Wiener spirals and some other interesting curves in {H}ilbert space},
  journal = {Dokl. Akad. Nauk SSSR},
  volume  = {26},
  number  = {2},
  year    = {1940},
  pages   = {115--118}
}

@ARTICLE{mandelbrotpaper,
  author  = {Mandelbrot, B. and Van Ness, J.},
  title   = {Fractional {B}rownian motions, fractional noises and applications},
  journal = {SIAM Review},
  volume  = {10},
  number  = {4},
  year    = {1968},
  pages   = {422--437},
  doi     = {https://doi.org/10.1137/1010093}
}

@BOOK{mikoschbook,
  author    = {Mikosch, T.},
  title     = {Elementary Stochastic Calculus with Finance in View},
  edition   = {1},
  series    = {Advanced Series on Statistical Science \& Applied Probability},
  publisher = {World Scientific Publishing},
  address   = {Singapore},
  year      = {2004}
}

@BOOK{mishurabook,
  author    = {Mishura, Y. S.},
  title     = {Stochastic Calculus for Fractional {B}rownian Motion and Related Processes},
  edition   = {1},
  series    = {Lecture Notes in Mathematics},
  publisher = {Springer},
  address   = {Berlin, Heidelberg},
  year      = {2008}
}

@ARTICLE{nunnomishurapaper,
  author  = {Di Nunno, G. and Kubilius, K. and Mishura, Y. and Yurchenko-Tytarenko, A.},
  title   = {From constant to rough: A survey of continuous volatility modeling},
  journal = {Mathematics},
  volume  = {11},
  number  = {19},
  year    = {2023},
  pages   = {4201},
  doi     = {https://doi.org/10.3390/math11194201}
}

@ARTICLE{pallavicinipaper,
  author  = {Nastasi, E. and Pallavicini, A. and Sartorelli, G.},
  title   = {Smile modelling in commodity markets},
  journal = {International Journal of Theoretical and Applied Finance},
  volume  = {23},
  number  = {3},
  year    = {2020},
  pages   = {1--28},
  doi     = {https://doi.org/10.1142/S0219024920500193}
}

@ARTICLE{rogerspaper,
  author  = {Rogers, L. C. G.},
  title   = {Arbitrage with fractional {B}rownian motion},
  journal = {Mathematical Finance},
  volume  = {7},
  number  = {1},
  year    = {1997},
  pages   = {95--105},
  doi     = {https://doi.org/10.1111/1467-9965.00025}
}

@ARTICLE{samuelsonpaper,
  author  = {Samuelson, P. A.},
  title   = {Proof that properly anticipated prices fluctuate randomly},
  journal = {Industrial Management Review},
  volume  = {6},
  number  = {2},
  year    = {1965},
  pages   = {41--49}
}

@ARTICLE{swishchukpaper,
  author  = {Swishchuk, A.},
  title   = {Explicit option pricing formula for mean-reverting asset in energy market},
  journal = {Journal of Numerical and Applied Mathematics},
  volume  = {96},
  year    = {2008},
  pages   = {216--233}
}

@ARTICLE{takaishipaper,
  author  = {Takaishi, T.},
  title   = {Rough volatility of {B}itcoin},
  journal = {Financial Research Letters},
  volume  = {32},
  year    = {2020},
  pages   = {101379},
  doi     = {https://doi.org/10.1016/j.frl.2019.101379}
}

@article{bayer,
author = {Christian Bayer and Peter Friz and Jim Gatheral},
title = {Pricing under rough volatility},
journal = {Quantitative Finance},
volume = {16},
number = {6},
pages = {887--904},
year = {2016},
publisher = {Routledge},
doi = {https://doi.org/10.1080/14697688.2015.1099717}
}

@article{abijaber,
author = {Abi Jaber, Eduardo and El Euch, Omar},
title = {Multifactor Approximation of Rough Volatility Models},
journal = {SIAM Journal on Financial Mathematics},
volume = {10},
number = {2},
pages = {309-349},
year = {2019},
doi = {https://doi.org/10.1137/18M1170236}
}

\end{document}